%% file: anonymous-submission-latex-2026.tex
\documentclass[accepted]{uai2026} 

\usepackage[american]{babel}

\usepackage{natbib}
\usepackage{mathtools} 
\usepackage{amssymb}
\usepackage{amsthm}
\usepackage{booktabs}
\usepackage{graphicx}
\usepackage{subcaption}
\usepackage{algorithm}
\usepackage{algpseudocode}

\theoremstyle{plain}
\newtheorem{theorem}{Theorem}[section]

\theoremstyle{definition}

\theoremstyle{remark}

\title{Why Weighting Matters: A Theoretical Perspective on Link Prediction Heuristics}

\author[1]{Jiajie Shi}
\author[1]{David A. Meyer}
\affil[1]{Department of Mathematics\\University of California San Diego}

\begin{document}
\maketitle
\pagestyle{numbered}

\begin{abstract}
In network analysis, nodes with different degrees are usually weighted differently in tasks such as link prediction. For example, two individuals connected to the same high-degree node (e.g., a celebrity) are often considered less likely to know each other than if they share a low-degree connection. In this paper, we theoretically justify the reasoning behind this heuristic using a graph generation model. In this model, nodes are embedded in a latent metric space, and edges are more likely to form between nearby nodes. We analytically derive degree-based weighting schemes for commonly used link prediction and similarity measures. In particular, we identify the optimal weighting scheme for the Common Neighbors index and extend our analysis to other popular heuristics, including the graph distance. Our theoretical findings align with previous empirical observations, and we further validate them through experiments on both synthetic and real-world networks.
\end{abstract}

\section{Introduction}

Link prediction is an important problem in the fields of network science and graph analysis. Given a graph without extra node features, the key for prediction is to assess the similarity of pairs, which quantifies the resemblance between two nodes. Many heuristic measures have been suggested to measure similarity, such as common neighbors \citep{newman2001clustering}, preferential attachment \citep{barabasi2002evolution}, and the Katz index \citep{katz1953new}.

The number of common neighbors of two nodes, defined as the number of nodes connected to both, has frequently served as a historical benchmark for link prediction, as well as a similarity measure. Empirical findings indicate that assigning lower weights to common nodes with higher degrees brings better link prediction results \citep{adamic2003friends,zhou2009predicting}. Recent studies also applied different weight schemes to other similarity indices such as the number of \(l\)-hop paths \citep{kovacs2019network} and the Katz index \citep{aziz2020path}.
This raises fundamental questions:

\emph{Why degree weighting matters in link prediction? And how should we weighting them?}

While many link prediction heuristics are well-supported by empirical evidence, there is far less theoretical understanding of their effectiveness. In particular, the role of degree-based weighting remains heuristic in nature. This paper provides a principled justification for such techniques through analysis in a latent space model.

Geometry has demonstrated the potential to help understand complex networks. Latent space models, particularly in Euclidean space, have been used to model network structures, as Sarkar et al. \citep{sarkar2011theoretical}, for example, derived error bounds for some popular link prediction indices within the Euclidean random graph model. However, the Euclidean graph models fail to capture key features of real networks, such as the small world phenomenon, scale-freeness, and power law degree distributions, and also does not explain why some of the link prediction methods outperform others. Krioukov et al.~\citep{krioukov2010hyperbolic} introduced a hyperbolic random graph (HRG) that can replicate the majority of the aforementioned properties, utilizing a relatively straightforward approach in hyperbolic space. Further studies on hyperbolic random graphs have also found properties that largely align with empirical findings on real-world networks, such as the existence of a giant component ~\citep{bode2013giant}, diameter ~\citep{kiwi2014bound, friedrich2018diameter}, average distance ~\citep{abdullah1506typical}, separators and treewidth ~\citep{blasius2016hyperbolic}, clique number ~\citep{friedrich2015cliques}, etc. These findings motivate the use of hyperbolic models for analyzing complex networks, as they reflect the structural patterns commonly observed in practice.
\subsection{Main Contribution and Organization}
In this paper, we address the two fundamental questions posed earlier by analyzing similarity indices in hyperbolic space. Our primary contributions are as follows:
\begin{itemize}
  \item 
  We show how the number of common neighbors relates to latent distance in hyperbolic space. We demonstrate that, by carefully weighting common neighbors based on their degrees, they can be more effectively used to estimate latent similarities.
  \item We extend our analysis to include some global indices such as the number of \(l\)-hop neighbors, yielding a weighted version of the \(l\)-hop indices and the Katz index.
  \item A modified similarity index is proposed based on the shortest path: HGD (hyperbolic graph distance), which provides more accurate estimation of long-range similarities than standard graph distance.
\end{itemize}
\textbf{Organization of this paper}:\\
As a preliminary, we introduce the hyperbolic graph model and basic properties in Section 3. Section 4 analyzes local similarity indices based on common neighbors and derives the optimal weighting. Section 5 extends the degree-weighting heuristic to global indices—including graph distance and the number of $l$-hop paths—and shows that these indices also benefit from appropriate weighting. Section 6 presents empirical validation and discusses potential applications.

\section{Related Work}
The concept of using similarity to estimate the likelihood of a link can be traced back to McPherson et al.~\citep{mcpherson2001birds}, who proposed ``Similarity breeds connection''. Over the years, many indices have been developed to measure the similarity between pairs of nodes in a network and thence used for link prediction. These include the Shortest Path~\citep{liben2003link}, which also can be called graph distance; the Number of Common Neighbors, first proposed by Newman~\citep{newman2001clustering} when studying the edges of co-authorship networks; the Adamic \& Adar index~\citep{adamic2003friends}, a weighted version of the common neighbor index; and the Preferential Attachment index~\citep{barabasi2002evolution}, from a model generating growing scale-free networks. More recently, variants of these indices keep emerging to improve link prediction performance: Kovács et al.~\citep{kovacs2019network} weighted 3-hop paths by the square root of the product of the intermediate nodes' degrees when predicting protein-protein interactions; and Aziz et al.~\citep{aziz2020path} proposed global and quasi-local extensions of some commonly used local similarity indices. And besides similarity based methods, there are other approaches to link prediction, such as random walk-based methods~\citep{tong2006fast}, embedding-based methods~\citep{belkin2001laplacian,tenenbaum2000global,kitsak2020link}, etc.

In the analysis of networks, homophily is one of the most important assumptions. Hoff et al. \citep{hoff2002latent} proposed a model where nodes are distributed in a $d$-dimensional Euclidean space, where the probability of forming a link is high if two nodes are close in latent space. Under this setup, we are able to quantify similarities within the network through distances between nodes in a low-dimensional Euclidean space. From a practical perspective, the coordinates can be seen as a embedding of nodes within a latent feature space.

Besides homophily, popularity has also been considered a factor contributed to connection. For example, the preferential attachment principle \citep{barabasi2002evolution}, known as ``rich get richer'', was proposed to explain the scale-free property observed in many networks. While the latent space models are limited to Euclidean space in past decades,  Krioukov et al.~\citep{krioukov2010hyperbolic} took a different approach by assuming a hyperbolic geometry underlying the network. Papadopoulos et al.~\citep{papadopoulos2012popularity} highlights that the hyperbolic model reveals additional factors influencing link formation in social networks beyond mere similarity, specifically popularity. The hyperbolic distance between nodes can be seen as a unified metric that encapsulates both popularity and similarity. The networks generated in this setting turned out to reproduce not only inhomogeneous degree distributions but also the strong clustering and small-worldness observed in many real-world networks. Some properties of hyperbolic random graphs have been derived rigorously~\citep{gugelmann2012random,friedrich2018diameter}. 

To our knowledge, this is the first theoretical analysis of popular link prediction heuristics using hyperbolic random graphs. Our results provide a unified explanation for several empirical phenomena commonly observed in real-world networks studies.

\section{Preliminaries}
\subsection{Hyperbolic Geometry}
Hyperbolic geometry is a negatively curved geometry that differs from Euclidean geometry. There are several ways to model a hyperbolic space, including the hyperboloid model, the Beltrami-Klein model, and the Poincaré ball model. Among these models, we will use the Poincaré ball, as it imposes a relatively simple constraint on the representations and is convenient to visualize.

To characterize a two-dimensional Poincaré disk, we adopt the notation introduced in \citep{krioukov2010hyperbolic}. In this notation, the polar coordinates \((r_i, \theta_i)\) specify the position of a node \(i\), with radial coordinate \(r_i \in [0,\infty)\) representing the hyperbolic distance of \(i\) from the origin. Assuming the curvature \(K=-\zeta^2\) where \(\zeta>0\), the hyperbolic distance \(d_{ij}\) between two points with coordinates \((r_i,\theta_i)\) and \((r_j,\theta_j)\) is given by the hyperbolic law of cosines:
{\fontsize{8.5}{8.5}\selectfont
\begin{equation*}
\begin{split}
\cosh (\zeta d_{ij})=\cosh (\zeta r_i) \cosh (\zeta r_j) - \sinh (\zeta r_i )\sinh (\zeta r_j ) \cos \Delta \theta_{i,j},
\end{split}
\end{equation*}
}

where $\Delta\theta_{i,j}= \pi-|\pi-|\theta_i-\theta_j||$. There is an approximation that simplifies calculations involving the hyperbolic distance:
\begin{equation}
d_{ij}\approx r_i+r_j+\frac{2}{\zeta}\log \frac{\Delta \theta_{i,j}}{2}.
\end{equation}
The approximation is close for sufficient large $\zeta r_i$, $\zeta r_j$, and $\Delta \theta_{i,j}> 2\sqrt{e^{-2\zeta r_i}+e^{-2\zeta r_j}}$. Theoretical derivation and error bounds for this approximation are provided in \citep{gugelmann2012random}.
Empirical results from \citet{krioukov2010hyperbolic} demonstrate that it is reasonably accurate in practice, and it has been widely adopted in studies involving hyperbolic distance \citep{blasius2018efficient_sp, krioukov2010hyperbolic,blasius2018efficient}. This formulation makes it evident that three key factors influence the distance: the centrality of node $i$, centrality of node $j$, and the angular similarity between them.
\subsection{Hyperbolic Random Graphs}
The hyperbolic random graphs described in \citep{krioukov2010hyperbolic} is a model that consists of a quasi-uniform distribution of $N$ points into a 2-dimensional Poincaré disk of radius $R$ around the origin, where the curvature is fixed to be $-1$. Two points are connected if and only if their hyperbolic distance is no greater than \(R\). More precisely, the angular distribution is uniform: $\rho(\theta)=1/(2\pi)$, and the radial probability density function is:
\[
\rho(r)=\alpha \frac{\sinh \alpha r}{\cosh \alpha R-1}\approx \alpha e^{\alpha(r-R)},
\]
where \(\alpha>\frac{1}{2}\). When \(\alpha=1\), the node density is exactly uniform.
The connection function is:
\(
p(d_{ij}) = \begin{cases}
  1  & \textrm{if } d_{ij}\leq R \\
  0 & \textrm{if } d_{ij}> R.
\end{cases}
\)
The model generates a graph with a power-law degree distribution, demonstrating that network heterogeneity emerges naturally in latent hyperbolic space.

In this model, nodes closer to the origin are more likely to connect with others. A natural interpretation is that node popularity is represented by radial position, while similarity is captured by angular proximity. Hyperbolic distance thus combines these two attributes—radial popularity and angular similarity (homophily)—into a single unified metric \citep{papadopoulos2012popularity}. This formulation allows the model to replicate many key properties of real-world networks, making it an ideal framework for studying their structure and mechanisms. Within this setting, we analyze how link prediction indices reflect latent distances, thereby influencing the probability of edge formation.

Additionally, a more general model was proposed in the paper \citep{krioukov2010hyperbolic}, using a probabilistic connection. In this paper we will focus on the simplest model where the connection function is deterministic.
\subsection{Parameter Estimation}
Given a hyperbolic random graph, we only have information about connections. In order to study the latent geometry, we can start with estimating the parameters controlling the graph. Good analytical estimates have been derived in the previous work~\citep{boguna2010sustaining}.
\\
\textbf{Estimating \(\alpha\) and \(R\)}\\
The parameter \(\alpha\) governs the power-law exponent \(\beta\) in the degree distribution through the relationship \(\beta = 2\alpha + 1\). As \(R\) controls the average degree, we can have the estimation of \(R\) from the average degree \(\bar{k}\) as follows:
\[
R=2\log\frac{8N\alpha^2}{\bar{k}\pi(2\alpha-1)^2}.
\]
\textbf{Estimating \(r\) from the degree}\\
Given the degree of node \(i\), all nodes connected to \(i\) lie in a \(R\)-disk centered around \(i\), denoted by $D(i,R)$, within the Poincaré disk. We use \(D_p(i,R)\) for convenience to represent the portion of \(D(i,R)\) within the Poincaré \(R\)-disk. Let \(\textrm{N}(i)\) denote the set of neighbors of node \(i\); then the expectation of \(|\textrm{N}(i)|\) is:
{\fontsize{8.5}{8.5}\selectfont
\begin{equation*}
\begin{split}
\mathbb{E}(|\textrm{N}(i)|) &=\frac{N}{2\pi}\iint_{D_p(i,R)}\rho(r)drd\theta 
    =\frac{N}{2\pi}\int_{0}^{R}\rho(r)\hat{\theta_i}(r) dr.
\end{split}
\end{equation*}}\\
\(\hat{\theta_i}(r) =2\pi \) if \(r\lessapprox R-r_i\), and \(\hat{\theta_i}(r) =2\theta_i(r) \) otherwise, where \(\theta_i(r)\) is defined as follows: suppose there is a point \(x\) on the boundary of the disk \(D(i,R)\), then \(\theta_i(r) \in [0, \pi]\) is the angular difference between \(x\) and \(i\), as shown in Figure 1(a).

According to approximation (1), for large \(R\), \(r_i\), and \(r\) we have \(\theta_i(r)\approx2e^{(R-r_i-r)/2}\).
So by substituting it into the integral and keeping the leading terms we have:
\begin{equation}
\mathbb{E}(|\textrm{N}(i)|)=\frac{4N\alpha}{\pi(2\alpha-1)}e^{-r_i/2}.
\end{equation}
Note that \(|\textrm{N}(i)|\) is nearly a Poisson random variable when \(N\) is large. So the maximum likelihood estimation for \(r_i\) given the degree \(k(i)\), is
\begin{equation}
r_i=-2\log\left(\frac{\pi(2\alpha-1)}{4N\alpha}k(i)\right).
\end{equation}

\begin{figure*}[ht]
\centering
\begin{subfigure}{0.2\textwidth}
\includegraphics[width=\textwidth]{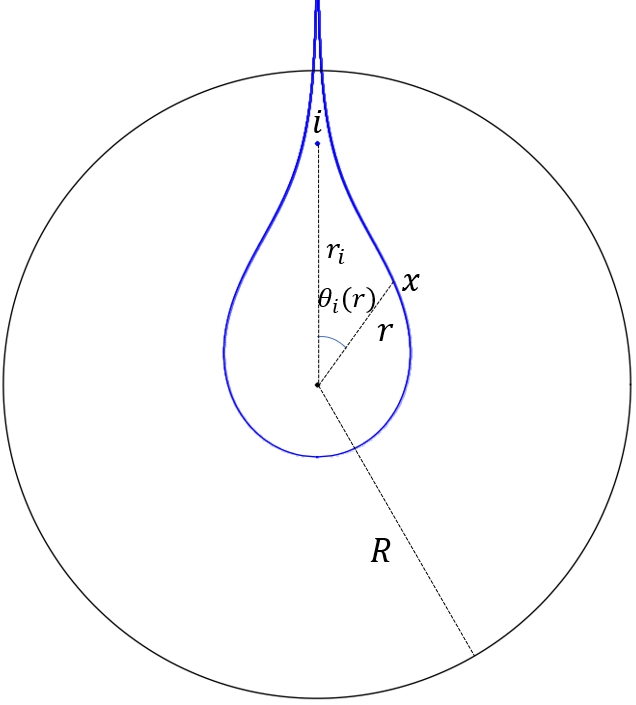}
\caption{}
\end{subfigure}\hspace{12mm}
\begin{subfigure}{0.2\textwidth}
\includegraphics[width=\textwidth]{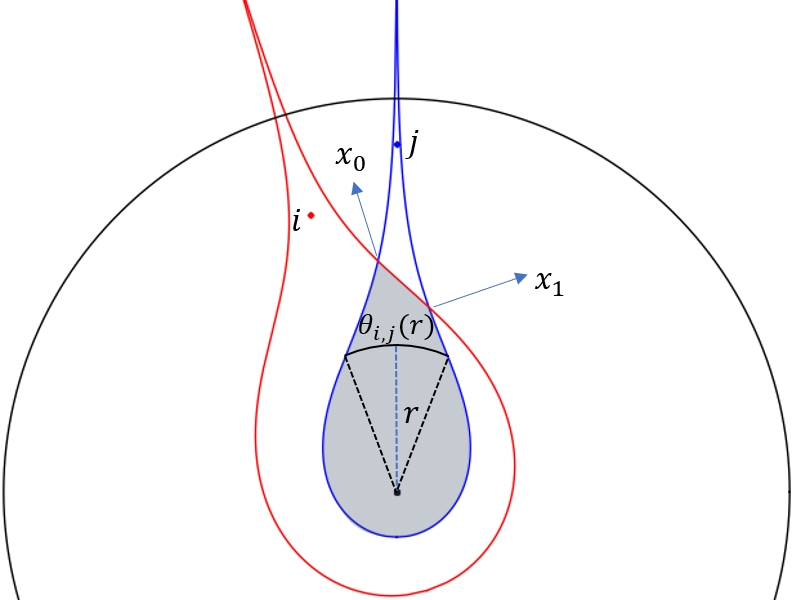}\vspace*{0.3in}
\caption{}
\end{subfigure}\hspace{12mm}
\begin{subfigure}{0.2\textwidth}
\includegraphics[width=\textwidth]{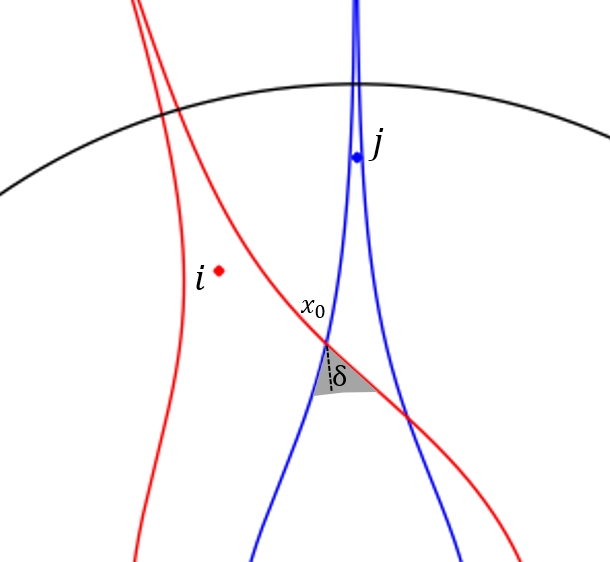}\vspace*{0.3in}
\caption{}
\end{subfigure}
\vspace{-3mm}
\caption{(a) depicts a disk with radius $R$ around node i, denoted by $D(i,R)$. (b) shows the intersection of two disks, $D(i,R)$ and $D(j,R)$. The gray area is the area we want to integrate over, where $\theta_{i,j}(r)$ is the angular section at radius $r$. (c) shows the area considered in the proof of Theorem 5.1.}
\end{figure*}

\section{Local Similarity Indices}
Most local similarity indices leverage information on the number of common neighbors and their degrees. Here are three representative examples:
\\
\textbf{Common Neighbors (CN):} 
$$S(i,j)=|\textrm{N}(i)\cap \textrm{N}(j)|.$$
This is one of the simplest techniques and has been widely used for link prediction and similarity measurement. The index counts the number of nodes connected to both \(i\) and \(j\).
\\
\textbf{Adamic/Adar Index (AA):} $$S(i,j)=\sum_{x\in \textrm{N}(i)\cap \textrm{N}(j)}\frac{1}{\log k_x},$$
where $k_x$ is the degree of a node $x$. We can see that greater weight is assigned to common
neighbors having smaller degrees. This is intuitive in the sense that two individuals are more likely to know each other if they both know a normal person rather than a celebrity.\\
\textbf{Resource Allocation (RA):} $$S(i,j)=\sum_{x\in \textrm{N}(i)\cap \textrm{N}(j)}\frac{1}{k_x}.$$
This index is very similar to the AA index, but it down weights the high degree-nodes even more. Lü and Zhou \citep{lu2011link} systematically compared various local similarity indices in real networks; their results show that the RA index has the best overall performance. Pan et al.~\citep{pan2010detecting} carried out a comparison of local indices within a similarity-based community detection algorithm; their experimental findings also indicated superior performance by the RA index.

In this section, we theoretically analyze how different weights affect link prediction performance.
\subsection{Number of Common Neighbors}
Consider two nodes \(i\) and \(j\) with $r_i\leq r_j$. Denoting the set of their common neighbors by $\textrm{CN}(i,j)$, the expectation of $|\textrm{CN}(i,j)|$ depends on the area of the intersection between two disks with a radius of $R$ within the Poincaré disk, centered at \(i\) and \(j\), respectively. Note that the boundaries of the disks $D(i,R)$ and $D(j,R)$ intersect at two distinct points, just as the circles do in Euclidean space. We denote these two intersection points by $x_0$ and $x_1$, and assume $r_{x_0}\geq r_{x_1}$; see Figure 1(b). 

The two intersection points should satisfy $\theta_i(r_{x_0})+\theta_j(r_{x_0})=\Delta \theta _{i,j}$, and $\theta_i(r_{x_1})-\theta_j(r_{x_1})=\Delta \theta _{i,j}$. Combining with the approximation (1), we have:
\begin{equation} \label{eq:2}
r_{x_0}=2\log(\frac{2e^{(R-r_i)/2}+2e^{(R-r_j)/2}}{\Delta \theta_{i,j}})
\end{equation}
\begin{equation}
r_{x_1}=2\log(\frac{2e^{(R-r_i)/2}-2e^{(R-r_j)/2}}{\Delta \theta_{i,j}}).
\end{equation}
In the same way, the angular section at radius $r$ that surrounded by the intersection shown in Figure 1(b) can be calculated as $\theta_{i,j}(r)= 2e^{(R-r_j-r)/2}+2e^{(R-r_i-r)/2}-\Delta \theta_{i,j}$ for $r_{x_1}\leq r \leq r_{x_0}$. And for $r \leq r_{x_1}$, the angular section is just $\theta_{i,j}(r)=\hat{\theta_i}(r)$, defined in Section 3.3.
Then the expectation of the number of common neighbors can be computed by:
\begin{equation*}
\begin{split}
\mathbb{E}(|\textrm{CN}(i,j)|) = \frac{N}{2\pi}\int_{0}^{\hat{r}_{x_0}}\, \rho(r)\,  \theta_{i,j}(r)  dr,
\end{split}
\end{equation*}
where $\hat{r}_{x_0}=\min(r_{x_0},R)$.\\
Note that when $r_{x_1}>R$, this expression becomes 
\begin{equation*}
\begin{split}
\mathbb{E}(|\textrm{CN}(i,j)|) =  \frac{N}{2\pi}\int_{0}^{R}\rho(y)\hat{\theta_i}(r)dr,
\end{split}
\end{equation*}
which is exactly same as the expression for $\mathbb{E}(|\textrm{N}(j)|)$ (equation 2). This corresponds to the situation where nodes \(i\) and \(j\) are very close so that the $R$-disk around \(i\) is completely surrounded by $D(j,R)$. We will mainly consider the more general case where $r_{x_1}$ is small compared with $R$.
\subsection{Estimating the Angular Difference with Lower Variance}
To estimate the latent distance of node pairs, we need to know $r_i$, $r_j$, and $\Delta \theta_{i,j}$. Since the values of $r_i$ and $r_j$ can be estimated based on the degree (see equation 3), our goal is to infer the angular difference of the pair based on the number of common neighbors.

In general, the common neighbor based indices can be written in the form of \(S=\sum_{k=1}^N w_k n_k\) where \(n_k=|\{x\in \textrm{N}(i)\cap \textrm{N}(j), \deg(x)=k\}|\) and \(w_k\) is a function of the node degree \(k\).
If \(r_i\) and \(r_j\) are known, the expectation of \(S\) can be calculated as follows:
\begin{equation*}
\begin{split}
\mathbb{E}[S]=\frac{N}{2\pi} \sum_k w_k \int_{0}^{\hat{r}_{x_0}}\, \rho(r)\, g(k|r)\, \theta_{i,j}(r)  dr,
 \end{split}
\end{equation*}
where the propagator \(g(k|r)\) is the conditional probability that a node with hidden variable \(r\) has degree \(k\). In a sparse network we assume that \(g(k|r)\) is Poissonian \citep{boguna2003class}: \(g(k|r)=\frac{e^{-\bar{k}(r)}\bar{k}(r)^k}{k!}\), where \(\bar{k}(r)=\frac{4N\alpha}{\pi(2\alpha-1)}e^{-r/2}\) is the expected degree of a node with radius \(r\), from equation (2). Using these approximations, the integral becomes:
\begin{equation*}
\begin{split}
\mathbb{E}[S]
    & \approx \sum_k w_k \frac{1}{k!}\Big[\Theta(1)f_1(r_i,r_j)\Gamma\big(k-2\alpha+1,\bar{k}(\hat{r}_{x_0})\big)\\
    &-\Theta(1)\Gamma\big(k-2\alpha,\bar{k}(\hat{r}_{x_0})\big)\Delta\theta_{i,j}\Big],
 \end{split}
\end{equation*}
where \(\Gamma\) is the upper incomplete-gamma function, and \(f_1(r_i,r_j)\) indicates a function of \(r_i\) and \(r_j\). The evaluation of the integral can be found in Appendix D.\\
Since \(\hat{r}_{x_0}=\min(r_{x_0},R)\), the expression above can be broken down into two cases depending on the value of \(r_{x_0}\). Assume $r_{x_0}\geq R$\footnote{In $r_{x_0}< R$ case, a similar result is shown in the Appendix D.}, then there is a linear relationship between \(\mathbb{E}(S)\) and \(\Delta\theta_{i,j}\): 
\begin{equation*}
\begin{split}
\mathbb{E}(S) & =\sum_k w_k \frac{1}{k!}[\Theta(1)f_1(r_i,r_j)\Gamma(k-2\alpha+1,\bar{k}(R))\\
                &-\Theta(1)\Gamma(k-2\alpha,\bar{k}(R))\Delta\theta_{i,j}].
\end{split}
\end{equation*}
Solving this equation for \(\Delta \theta_{i,j}\) gives $\hat{f}(S)=\frac{f_2(w_k,r_i,r_j,k)-S}{\Theta(1)\sum_k w_k \frac{1}{k!}\Gamma(k-2\alpha,\bar{k}(R))}$, which is an unbiased estimator of $\Delta \theta_{i,j}$ because of the linearity; \(f_2(w_k,r_i,r_j,k)\) here represents a function that depends on \(w_k,r_i,r_j\) and \(k\).

Also note that \(n_k\) can be taken to be a Poisson random variable, so the variance of $n_k$ can be calculated as:
\begin{equation*}
\begin{split}
\textrm{Var}(n_k)= \mathbb{E}(n_k)&=\frac{N}{2\pi}\int_{0}^{R}\, \rho(r)\, g(k|r)\, \theta_{i,j}(r)  dr\\
&= \frac{1}{k!}[\Theta(1)f(r_i,r_j)\Gamma(k-2\alpha+1,\bar{k}(R))\\
& \quad -\Theta(1)\Gamma(k-2\alpha,\bar{k}(R))\Delta\theta_{i,j}].
\end{split}
\end{equation*}
When \(N\) is large, the \(n_k\) should be approximately independent. Then the variance of \(\hat{f}(S)\) is:
\begin{equation*}
\begin{split}
\textrm{Var}(\hat{f}(S)) &= \frac{\sum w_k^2 \textrm{Var}(n_k)}{[\Theta(1)\sum_k w_k\frac{1}{k!} \Gamma\big(k-2\alpha,\bar{k}(R)\big)]^2} .
\end{split}
\end{equation*}
To minimize the variance of the unbiased estimator $\hat{f}(S)$, we set \(\frac{d\, \textrm{Var}(\hat{f}(S))}{d w_{k^*}}=0\), and by solving this equation (the detailed calculation is in Appendix D.), we get:\\
\begin{equation*}
\begin{split}
w_{k^*} &\propto \frac{\Gamma({k^*}-2\alpha,\bar{k}(R))}{k^*!\textrm{Var}(n_{k^*})}\\
&= \frac{\Gamma({k^*}-2\alpha,\bar{k}(R))}{\Gamma(k^*-2\alpha+1,\bar{k}(R))-\Theta(1)\Gamma(k^*-2\alpha,\bar{k}(R))\Delta \theta}\\
&\sim \frac{1}{k^*}
\end{split}
\end{equation*}
as $k^*\to \infty$, which is identical to the weight in the RA index.

An alternative way to interpret the argument above is as follows: if we consider each value of \(n_k\) individually, we can think of each \(S(n_k)=\frac{f(r_i,r_j,k)-n_k}{\Theta(1)\frac{1}{k!}\Gamma(k-2\alpha,\bar{k}(R))}\) as an unbiased estimator for the same parameter \(\Delta \theta_{i,j}\) but with different variances. Under the assumption that \(N\) is large, these estimators are nearly uncorrelated. In this context, it is possible to obtain a combined estimator with reduced variance by assigning optimal weights to each estimator.

We also found that the latent distance can be approximated as a monotonically decreasing function of the RA index (See Appendix A.). In practical situations, obtaining a precise measurement of the latent distance isn't always essential. Often, the goal is to evaluate similarities for comparative purposes. In such instances, it is convenient to utilize RA directly to perform tasks like link prediction.

In our analysis, we have justified that the RA index exhibits the least variance when estimating the distance between nodes. Additionally, in situations where we lack common neighbor information about two nodes, a naive approximation can be made by ignoring the \(\Delta\theta_{i,j}\) term. When doing so, we can approximate the distance between nodes as \(d_{ij}\approx r_i+r_j\), which can be combined with Equation (3) to show that the distance has a negative correlation with \(|\textrm{N}(i)||\textrm{N}(j)|\)—an expression that is equivalent to the preferential attachment index.

\section{Global Similarity Indices}
\subsection{The Shortest Path Index}
The shortest path index of node pair \((i,j)\) is the minimum length of a path connecting them. Besides being used as a similarity index, it is also widely used in tasks such as graph drawing, community detection, and graph embedding. 
\\
\textbf{The regular shortest-path-index}\\
For Euclidean random graphs with connection threshold \(R\), we define the graph distance as \(\hat{d}_{ij}=lR\), where \(l\) is the length of the shortest path. It is clear that the true latent distance \(d_{ij}\leq lR=\hat{d}_{ij}\) from the definition of the connection function.
An upper error bound of graph distance \(\hat{d}_{ij}\) has been derived by Arias-Castro et al.~\citep{arias2021estimation}, giving:
\\
\[
0\leq \hat{d}_{ij}-d_{ij}\leq 4(\epsilon/R)d_{ij}+R,
\]
when \(\epsilon<R/4\), where \(\epsilon=\textrm{sup}_{x\in \Omega} \textrm{min}_{i\in [N]}||x-x_i||\), indicating how dense is the points layout, and \(\Omega\) is the convex hull of all nodes. It is noteworthy that the upper error bound can not be reduced below \(R\), even in the infinitely dense setting, as \(\epsilon \rightarrow 0\).

The last error term \(R\) seems acceptable in the Euclidean graph models, where the connectivity radius can be chosen to be considerably smaller than the latent space diameter. This is not necessarily the case in hyperbolic models. In HRG, however, the connectivity radius is fixed to be \(R\), the radius of the Poincaré \(R\)-disk, which would make the upper error bound large compared with the true latent distance. In fact, since \(0\leq d_{ij}\leq 2R\) for any node pair \((i,j)\), using the standard shortest path approximation not only preserves the irreducible error term, but also introduces heavy discretization, making the index unsuitable for tasks such as link prediction. This explains the poor behavior of the shortest path index observed by Liben-Nowell \& Kleinberg~\citep{liben2003link}.
\\ 
\textbf{A modified shortest-path -index (HGD)}\\
In the hyperbolic model, given two nodes \(i\) and \(j\), the radius of any third node connecting them is constrained within a specific range depending on the values of \(r_i\), \(r_j\), and \(\Delta \theta_{i,j}\). Conversely, the radius of the intermediate node can provide bound on \(\Delta\theta_{i,j}\), given knowledge of \(r_i\) and $r_j\).

Assume we have information about the radii of all nodes. Suppose two nodes \(i\), \(j\) are connected through a path consisting of \(m_1,m_2,\ldots,m_{l-1}\). Instead of directly bounding \(d_{ij}\) by \(lR\), we can first bound the angular difference between \(i$ and \(m_1$. From equation (1) we have \(\Delta \theta_{i,m_1}\leq 2e^{R/2}e^{(-r_i-r_{m_1})/2}\) as they are connected; the bound for \(\Delta \theta_{m_1,m_2}\) can be found in the same way. We can then bound the difference in angle between \(i\) and \(m_2\) since\(\Delta \theta_{i,m_1}\leq \Delta \theta_{i,m_1}+\Delta \theta_{m_1,m_2}\), and consecutively bound the angular difference until we obtain a bound on the difference in angle between \(i\) and \(j\) by simply summing up these bounds. We can then use this bound on angle difference \(\Delta \hat{\theta}_{i,j}\) to determine that the distance between \(i\) and \(j\) is no greater than \(r_i+r_j+2\log\frac{\Delta \hat{\theta}_{i,j}}{2}\). According to this process we can define:
\[
l(i,\, j,\, P)=r_i+r_j+2\log\frac{\Delta \hat{\theta}_{i,j}}{2},
\]
where \(\Delta \hat{\theta}_{i,j}=2e^{R/2}(e^{(-r_i-r_{m_1})/2}+e^{(-r_{m_1}-r_{m_2})/2}+\cdots+e^{(-r_{m_{l-1}}-r_j)/2})\), if there exists a path \(P\) connecting \(i\) and \(j\): \(i\sim m_1 \sim\cdots\sim m_{l-1} \sim j\).
The hyperbolic graph distance (HGD) can be defined as:
\begin{equation}
    \textrm{HGD}(i,j)=\min_P l(i,j,P).  
\end{equation}
The following theorem states that for long-range distance the HGD achieves better error bounds with increasing node density.\\
\begin{theorem} Given nodes \(i\) and \(j\), and knowing \(d_{ij}\geq 2\log(e^{r_i/2}+e^{r_j/2})\),
we have:
\begin{equation}
d_{ij}\leq {\rm HGD}(i,j)\leq d_{ij}+\delta,
\end{equation}
with probability \(1-(1-p)^N\), where \(p=\frac{1}{2\pi}e^{\alpha(r_{x_0}-R)}\Delta \theta_{i,j} (\frac{1}{2\alpha-1}-\frac{2\alpha}{2\alpha-1}e^{-(\alpha-\frac{1}{2})\delta}+e^{-\alpha \delta})\).
\end{theorem}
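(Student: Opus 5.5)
The plan has two parts: a deterministic lower bound, and an upper bound obtained from a single well-placed common neighbor.

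\textbf{Lower bound.} For any path $P: i\sim m_1\sim\cdots\sim m_{l-1}\sim j$, each edge satisfies $\Delta\theta_{m_s,m_{s+1}}\le 2e^{(R-r_{m_s}-r_{m_{s+1}})/2}$ by approximation (1). The angular triangle inequality then gives $\Delta\theta_{i,j}\le\Delta\hat\theta_{i,j}(P)$. Since $r_i+r_j+2\log(\cdot/2)$ is increasing, $d_{ij}\le l(i,j,P)$ for every $P$, and taking the minimum gives $d_{ij}\le{\rm HGD}(i,j)$. This holds with probability one.

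\textbf{Upper bound.} It suffices to exhibit one path with $l(i,j,P)\le d_{ij}+\delta$. I would look only at two-hop paths $i\sim x\sim j$, for which
\[
l(i,j,x)=r_i+r_j+2\log\bigl(e^{(R-r_i-r_x)/2}+e^{(R-r_j-r_x)/2}\bigr).
\]
The condition $l\le d_{ij}+\delta$, with $d_{ij}=r_i+r_j+2\log(\Delta\theta_{i,j}/2)$, becomes
\[
e^{(R-r_x)/2}\bigl(e^{-r_i/2}+e^{-r_j/2}\bigr)\le e^{\delta/2}\,\Delta\theta_{i,j}/2 .
\]
Comparing with equation (\ref{eq:2}), this is exactly $r_x\ge r_{x_0}-\delta$. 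So the favorable region is the set of common neighbors of $i$ and $j$ with radius in $[r_{x_0}-\delta,\,r_{x_0}]$, the lens tip near $x_0$ shown in Figure 1(c).

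The hypothesis $d_{ij}\ge 2\log(e^{r_i/2}+e^{r_j/2})$ is equivalent to $\Delta\theta_{i,j}\ge 2(e^{-r_i/2}+e^{-r_j/2})$, which by (\ref{eq:2}) means $r_{x_0}\le R$. This ensures the region lies inside the Poincar\'e disk, and it also places $x_0$ in the regime where approximation (1) is valid. The $N$ nodes are independent, so if $p$ is the probability that one node lands in this region, then with probability $1-(1-p)^N$ some node does. That node yields a path $P$ with ${\rm HGD}\le l(i,j,P)\le d_{ij}+\delta$.

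\textbf{Main obstacle: computing $p$.} We need
\[
p=\frac{1}{2\pi}\int_{r_{x_0}-\delta}^{r_{x_0}}\rho(r)\,\theta_{i,j}(r)\,dr,
\qquad
\theta_{i,j}(r)=2e^{(R-r)/2}\bigl(e^{-r_i/2}+e^{-r_j/2}\bigr)-\Delta\theta_{i,j}.
\]
Using (\ref{eq:2}), $\theta_{i,j}(r)=\Delta\theta_{i,j}\bigl(e^{(r_{x_0}-r)/2}-1\bigr)$. Substitute this together with $\rho(r)\approx\alpha e^{\alpha(r-R)}$, and change variables to $t=r_{x_0}-r\in[0,\delta]$. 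This gives
\[
p=\frac{\alpha}{2\pi}e^{\alpha(r_{x_0}-R)}\Delta\theta_{i,j}\int_0^\delta\bigl(e^{-(\alpha-\frac12)t}-e^{-\alpha t}\bigr)\,dt .
\]
The two elementary integrals evaluate to
\[
\frac{\alpha}{\alpha-\frac12}\bigl(1-e^{-(\alpha-\frac12)\delta}\bigr)-\bigl(1-e^{-\alpha\delta}\bigr)
=\frac{1}{2\alpha-1}-\frac{2\alpha}{2\alpha-1}e^{-(\alpha-\frac12)\delta}+e^{-\alpha\delta},
\]
which is the stated expression. The care needed here is in tracking constants and in justifying the approximations to $\rho$ and $\theta_{i,j}$ near $x_0$.
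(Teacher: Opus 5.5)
Your proposal is correct and follows the same route as the paper's proof in Appendix B. The lower bound comes from the angular-sum construction of $\Delta\hat\theta_{i,j}$, and the upper bound from a single common neighbor in the lens tip $r\in[r_{x_0}-\delta,r_{x_0}]$, with $p$ computed by the same integral; you also spell out the algebra the paper leaves implicit, namely $l(i,j,x)\le d_{ij}+\delta \iff r_x\ge r_{x_0}-\delta$ and $\theta_{i,j}(r)=\Delta\theta_{i,j}\,(e^{(r_{x_0}-r)/2}-1)$.
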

The \(p\) above is the probability that a point falls in the area depicted in Figure 1(c). Full proof of the theorem can be found in Appendix B. This theorem shows the consistency of HGD for estimating distance when \(d_{ij}\) is relatively large. In the case when \(d_{ij}< 2\log(e^{r_i/2}+e^{r_j/2})\), causing \(r_{x_0}>R\), a similar derivation gives us an upper bound of \(d_{ij}+\delta+(r_{x_0}-R)\). Although we now have an extra term \(r_{x_0}-R\), this term would be small compared to \(R$ as long as \(r_{x_0}\) is not too big, i.e., \(i$ and \(j\) are not too close.
\\
\noindent \textbf{Algorithms to calculate HGD}\\
Given a graph, we can estimate the latent radius of node \(i\) based on its degree \(k(i)\) according to equation (3). After plugging in, the HGD index can be written neatly as:
{\fontsize{8.5}{8.5}\selectfont
\begin{equation}
\label{eq:HGD}
\textrm{HGD$(i,j)$}=\min_P \big(R+2\log(\frac{k(i)k(m_1)+...+k(m_{l-1})k(j)}{k(i)k(j)})\big).
\end{equation}}
We can see that in addition to the path length, HGD assigns weights according to the degree of nodes along the path. This adjustment not only makes the index less discrete but also accounts for the relative importance of nodes based on their connectivity.

Many algorithms, such as Dijkstra's, have been proposed for calculating the shortest path between nodes, both for unweighted and weighted networks. To compute the hyperbolic graph distance, a simple modification can be applied to transform the problem into finding the shortest weighted paths between node pairs: Initially, for every node pair \((m_1,m_2)\) we assign the edge weight to be \(k(m_1)k(m_2)\). Then, by treating the network as a weighted network, we can determine the shortest weighted path between \(i\) and \(j\). Dividing this value by \(k(i)k(j)\) yields the hyperbolic graph distance between node \(i\) and \(j\).
\subsection{The Number of \(l\)-hop Paths and the Katz Index}
There is usually more than one path between node \(i\) and \(j\). That is to say, by using the shortest path as the link prediction index, we omit some other information about the similarity. An index that tries to capture this information is the number of \(l\)-hop paths between two nodes, where an \(l\)-hop path consists of \(l\) edges between two nodes:
\[
l\textrm{-hop}(i,j)=|\text{path}^{l}_{i,j}|.
\]
Inspired by previous discussion, we can define a weighted variant of the number of \(l\)-hop paths:
\[
l\textrm{-hop*}(i,j)=\sum \text{path}^{*l}_{i,j},
\]
where \(\text{path}^{*l}_{i,j}\) means that the path is weighted according to the degrees of the nodes connecting node \(i\) and node \(j\). We assume the weight is \(w_k\). Through an analysis similar to the one in Section 4, it can be shown that choosing \(w_k\) to be \(\frac{1}{k}\) leads to a smaller error bound when estimating the angular difference (analyzed in Appendix C). This modification can be also extended to Katz index \citep{katz1953new}, which is a emsemble of \(l\)-hop paths indices. For detailed definitions of Katz and its variants, refer to Appendix E.

\section{Experimental Analysis}
\subsection{Numerical Experiments on HRG}
In this section, we verify our analysis by running experiments on HRGs. We generated 6 hyperbolic random graphs with different choices of $N$, $\alpha$ and $R$. We randomly sample 10000 pairs of nodes, and for each index calculate the Spearman's correlation coefficient with respect to the true latent distance as an evaluation metric. For link prediction, we following the standard routine: we randomly remove $10\%$ of the total links and use indices on the remaining graph to predict missing links. The performance is measured by average precision. In Appendix F we provide experiment details and more tests on different configurations of HRG. Appendix H explains the evaluation metrics used in these experiments.\\
\textbf{Results Analysis}\\
Table 1 presents the results. Among the three local indices (CN, AA, and RA), RA shows the strongest correlation with latent distances, consistent with both our theoretical analysis and prior empirical findings \citep{lu2011link}. For global indices, the weighted versions consistently outperform their unweighted counterparts across all networks. Notably, the HGD index performs exceptionally well, demonstrating a strong correlation with the true latent distances.

The right part of table 1 displays the precision scores for link prediction. Much like the correlation test, the weighted version of each index outperforms the unweighted version. Notice that the standard graph distance proves to be nearly useless in this task, yielding scores close to 0. Sensitivity to missing links also impacts the HGD index, causing the latter to no longer be the best-performing index in this task. It still performs considerably better, however, than the standard graph distance (GD). This also confirms that the shortest path indices might be a good choice for a similarity measure, but a poor choice for link prediction.
\begin{table*}[t]
\centering
\resizebox{1\textwidth}{!}{
\begin{tabular}{ccccccccccccc} 
\hline
        & \multicolumn{6}{c}{Correlation}                                                                                 & \multicolumn{6}{c}{Link Prediction Precision}                                                                        \\ 
\cmidrule(lr){1-7}
\cmidrule(lr){8-13}
        & 1                & 2                & 3                & 4                & 5                & 6                & 1               & 2               & 3               & 4               & 5               & 6                \\ 
\cmidrule(lr){1-7}
\cmidrule(lr){8-13}
CN      & -0.8996          & -0.5340          & -0.8043          & -0.4279          & -0.7079          & -0.3329          & 0.5430          & 0.3413          & 0.4950          & 0.2907          & 0.6286          & 0.4077           \\
AA      & -0.9070          & -0.5358          & -0.8172          & \textit{-0.4286} & -0.7124          & \textit{-0.3331} & 0.6897          & 0.5273          & 0.6679          & 0.4818          & 0.7461          & 0.5774           \\
RA      & \textit{-0.9098} & \textit{-0.5360} & \textit{-0.8217} & \textit{-0.4286} & \textit{-0.7137} & \textit{-0.3331} & 0.8041          & 0.5879          & 0.7596          & 0.5108          & \textbf{0.8141} & \textbf{0.6331}  \\ 
\cmidrule(lr){1-7}
\cmidrule(lr){8-13}
GD      & 0.6990           & 0.5865           & 0.7010           & 0.5741           & 0.6957           & 0.4608           & 0.0092          & 0.0162          & 0.0058          & 0.0189          & 0.0056          & 0.0210           \\
HGD     & \textbf{0.9704}  & \textbf{0.8396}  & \textbf{0.9533}  & \textbf{0.7959}  & \textbf{0.9553}  & \textbf{0.8102}  & 0.6463          & 0.3652          & 0.5346          & 0.3053          & 0.4435          & 0.2914           \\ 
\cmidrule(lr){1-7}
\cmidrule(lr){8-13}
3-hop   & -0.9058          & -0.6054          & -0.8644          & -0.5405          & -0.8267          & -0.4814          & 0.3334          & 0.1779          & 0.2514          & 0.1442          & 0.2940          & 0.1823           \\
3-hop*  & \textit{-0.9580} & \textit{-0.6481} & \textit{-0.9263} & \textit{-0.5614} & \textit{-0.8736} & \textit{-0.4891} & 0.7186          & 0.4739          & 0.6922          & 0.3923          & 0.7312          & 0.5046           \\ 
\cmidrule(lr){1-7}
\cmidrule(lr){8-13}
Katz-3  & -0.9252          & -0.6094          & -0.8723          & -0.5421          & -0.8267          & -0.4815          & 0.5498          & 0.3508          & 0.5038          & 0.3031          & 0.6364          & 0.4110           \\
Katz*-3 & -0.9492          & -0.6455          & -0.9098          & -0.5619          & -0.8678          & -0.4890          & \textbf{0.8068} & \textbf{0.5902} & \textbf{0.7632} & \textbf{0.5143} & \textbf{0.8141} & \textbf{0.6334}  \\
Katz    & -0.9246          & -0.6137          & -0.8672          & -0.5806          & -0.8241          & -0.4776          & 0.5491          & 0.3511          & 0.5036          & 0.3036          & 0.6361          & 0.4111           \\
Katz*   & \textit{-0.9586} & \textit{-0.6976} & \textit{-0.9288} & \textit{-0.6442} & \textit{-0.8922} & \textit{-0.5362} & \textbf{0.8065} & 0.5896          & 0.7619          & \textbf{0.5139} & \textbf{0.8141} & \textbf{0.6334}  \\
\hline
\end{tabular}
}
\label{syn}
\caption{Spearman's correlation and link prediction precision in HRGs. Indices weighted by $1/k$ are marked with an asterisk (*). The Katz-3 index represents a localized version of the Katz index, considering paths of up to 3 hops.}
\end{table*}

\subsection{Applications}
In the previous section, we analyzed similarity heuristics in synthetic networks. In principle, these modified heuristics can serve as direct substitutes in real-world network tasks where node similarity needs to be assessed. Here, we briefly discuss a few illustrative use cases.\\
\textbf{Link Prediction}\\
As machine learning evolves, more sophisticated link prediction methods are proposed. However, research \citep{zhang2021labeling} shows that simple heuristics like CN and RA can sometimes outperform advanced methods like Graph Autoencoders or Message Passing Neural Networks by large margins. In table 2 and appendix G we test the link prediction heuristics we have discussed on real networks to compare their performance.

The results in Table 2 demonstrate that the behavior of these indices on real networks is consistent with what we observe in synthetic hyperbolic graphs: weighted versions consistently outperform their unweighted counterparts and are often among the top-performing methods.
As in the synthetic case, shortest-path-based indices such as GD and HGD are sensitive to missing links, which affects their predictive accuracy. We therefore recommend HGD primarily as a similarity measure for tasks such as clustering and embedding—particularly in settings where the full graph is preserved—since in such cases, HGD can provide more informative estimates of similarity than standard graph distance.\\
\textbf{Graph Embedding using HGD}\\
In representation learning on graphs, the goal is to embed nodes as low-dimensional vectors. Hyperbolic space has shown strong potential for representing hierarchical or tree-structured data \citep{nickel2017poincare, sala2018representation}, and many existing methods rely on estimating pairwise node similarities. Hyperbolic embedding approaches such as hyperbolic ordinal embedding \citep{suzuki2019hyperbolic} and h-MDS \citep{sala2018representation} are typically based on pairwise distances, which are often approximated using graph distance. We propose using Hyperbolic Graph Distance (HGD) as a more accurate substitute for graph distance when estimating latent distances for embedding. Table 3 demonstrates that a simple substitution of graph distance with HGD leads to reduced embedding distortion. Appendix I provides further details on the methodology and evaluation metrics.

\begin{table}[!htb]
    \centering
    
    \medskip
\vspace{-3mm}
\resizebox{1\columnwidth}{!}{
\begin{tabular}{llllll} 
\hline
       & Karate          & USair97         & C.elegans        & Harvard  & Cora           \\ 
\hline
CN     & 0.1122          & 0.3363          & 0.0663          & 0.1111  & 0.0131         \\
AA     & 0.1349          & 0.3728          & 0.1312          & 0.1227  & 0.0229         \\
RA     & 0.1317          & 0.4244          & 0.1756          & \textbf{0.1552}  & 0.0216         \\ 
\hline
GD     & 0.0209          & 0.0109          & 0.0052          & 0.0087 & 0.0037          \\
HGD    & 0.1269          & 0.3225          & 0.0655          & 0.0267   & 0.0028        \\ 
\hline
3-hop   & 0.0883          & 0.3287          & 0.0821          & 0.0559  & 0.0082         \\
3-hop*  & \textbf{0.1643} & 0.4164          & 0.1243          & 0.0837   & 0.0248        \\ 
\hline
Katz-3  & 0.1583          & 0.3476          & 0.0869          & 0.1069 & 0.0185          \\
Katz*-3 & 0.1523          & \textbf{0.4460} & \textbf{0.1774} & \textbf{0.1551}   & 0.0273        \\
Katz   & 0.1513          & 0.3470          & 0.0782          & 0.1058  & 0.0185         \\
Katz*  & 0.1514          & 0.4363          & 0.1766          & \textbf{0.1551} & \textbf{0.0285} \\
\hline
\end{tabular}
}
\caption{Link prediction precision on selected real-world networks. Results on more datasets can be found in Appendix G.}
\label{tab:link-prediction}
\end{table}

\begin{table}[!htb]
    \centering
    
    \medskip
\vspace{-3mm}
\begin{tabular}{ccccc}
\hline
          & \multicolumn{2}{c}{h-MDS} & \multicolumn{2}{c}{h-MDS*}  \\ 
\cmidrule(lr){2-3}
\cmidrule(lr){4-5}
Dimension =      & 2      & 10               & 2      & 10                 \\ 
\cmidrule(lr){2-3}
\cmidrule(lr){4-5}
Bal. Tree & 0.2362 & 0.6055           & 0.2534 & \textbf{0.6368}    \\
Phy. Tree & 0.1186 & 0.1283           & 0.5395 & \textbf{0.6458}    \\
CS PhDs   & 0.1921 & 0.2586           & 0.2374 & \textbf{0.3789}    \\
WordNet   & 0.2255 & 0.2969           & 0.2353 & \textbf{0.3333}    \\
\hline
\end{tabular}
\caption{The MAP (mean average precision) of h-MDS, where h-MDS* implies using HGD instead of graph distance.}
\label{tab:embedding-map}
\end{table}
\section{Conclusion}
In this paper, we analyze popular link heuristics in the context of the hyperbolic graph model. We provide a theoretical justification for the effectiveness of weighting common neighbors based on their degrees. Our analysis further extends to global similarity measures, such as the number of $l$-hop paths, yielding similar insights. Additionally, we introduce the Hyperbolic Graph Distance (HGD), a modified version of the shortest path index, which demonstrates lower error in both theoretical analysis and empirical evaluations.

We believe that hyperbolic latent space models offer valuable insights into the structure of complex networks, as they provide a principled explanation for many empirical phenomena. However, real-world networks often go beyond simple graphs, highlighting the need to incorporate further factors in future studies.

\clearpage
\appendix
\input{anonymous-submission-latex-2026_appendix}

\bibliography{refs_AAAI}

\end{document}

%% file: anonymous-submission-latex-2026_appendix.tex
\section{Estimation of the Latent Distance with Weighted Common Neighbors}
Subject to some approximations, we know that selecting the weight $w_{k}=\frac{1}{k}$ gives a lower variance in terms of estimating the angle difference. Now we can establish a relationship between the latent distance and the index $S^*=\sum_{k=1}^N \frac{1}{k}n_k$.
Different treatment is required for the cases where $r_{x_0}< R$ and $r_{x_0}\geq R$ in our analysis. This is due to the fact that the change in the upper bound of the integration leads to different expressions for these two scenarios.
\\
\textbf{Case 1:} $r_{x_0}< R$\\
Note that $r_{x_0}< R$ implies that $\Delta\theta_{i,j} > (2e^{-r_i/2}+2e^{-r_j/2})$. While it is hard to compute the expectation exactly, obtaining an approximation is relatively straightforward:
\begin{align*}
\mathbb{E}(S^*) & = \frac{N}{2\pi} \sum_k w_k^* \int_{0}^{r_{x_0}}\, \rho(r)\, g(k|r)\, \theta_{i,j}(r)  dr\\
&=\frac{N}{2\pi}\int_{0}^{r_{x_0}}\sum_k  w_k^* g(k|r)\rho(r)\theta_{i,j}(r)  dr\\
&\approx \frac{N}{2\pi}\int_{0}^{r_{x_0}} \frac{\rho(r)\theta_{i,j}(r)}{\bar{k}(r)} dr\\
&= \frac{2\alpha-1}{2\alpha}\exp(\frac{R}{2}-\frac{r_j}{2}-\alpha r_i)(\frac{2}{\Delta \theta_{i,j}})^{2\alpha}.
\end{align*}
By taking inverse of above expression we have an estimation for $\Delta\theta_{i,j}$ in term of $S^*$, plugging which into equation(1) in the main paper gives the estimation of distance:
{\fontsize{8.5}{8.5}\selectfont
\[
\hat{d_{ij}}=f(S^*)= \frac{1}{2\alpha}R+(1-\frac{1}{2\alpha})r_j+\frac{1}{\alpha}ln(\frac{2\alpha-1}{2\alpha})-\frac{1}{\alpha}\log S^*.
\]}
Note that in the unweighted case, by taking the inverse of the given expectation, we obtain a maximum likelihood estimator (MLE) because the number of common neighbors can be considered as a Poisson random variable. It is known that the MLE for the parameter of a Poisson random variable is the expectation itself. However, in the weighted common neighbor case, the estimator we get is not guaranteed to be MLE. 

Another observation is that we can effectively treat the second term involving $r_j$ as a constant, thus the approximation simplifies into a single-variable function $f(S^*)=\Theta(1)-\frac{1}{\alpha}\ln S^*$. This simplification is reasonable because of the assumption that $r_j>r_i$, and the characteristics of the radius distribution, which indicate that $r_j$ is likely to be close to $R$, and the variance of $r_j$ should be small.\\
\textbf{Case 2:} $r_{x_0} \geq R$\\
Combined with the assumption $r_{x_1}< R$, we have the inequality: 
\[
(2e^{-r_i/2}-2e^{-r_j/2})<\Delta \theta_{i,j} \leq (2e^{-r_i/2}+2e^{-r_j/2}).
\]
Since $r_j$ is the bigger radius, close to R, this implies that $\Delta \theta_{i,j}$ lies in a small interval around $2e^{-r_i/2}$. By plugging into $d\approx r_i+r_j+2ln\frac{\Delta \theta_{i,j}}{2}$, we know that in this case $d$ is very close to $r_j$.\\
By computing the expectation as follows:
{\fontsize{8.5}{8.5}\selectfont
\begin{equation*}
\begin{split}
\mathbb{E}(S^*) & \approx\frac{2\alpha-1}{8}(\frac{1}{\alpha}(2e^{(R-r_j)/2}+2e^{(R-r_i)/2})-\frac{1}{\alpha+\frac{1}{2}}e^{\frac{R}{2}}\Delta \theta_{i,j})\\
&=\frac{2\alpha-1}{8}(\frac{1}{\alpha}(2e^{(R-r_j)/2}+2e^{(R-r_i)/2})\\
& \quad -\frac{2}{\alpha+\frac{1}{2}}e^{(R+d-r_i-r_j)/2}).\\
\end{split}
\end{equation*}}
We can then simplify the expression by the first order Taylor's expansion of $r_i$ and $r_j$ around $R$. Then the above expression can be written as:
\begin{equation*}
\begin{split}
\mathbb{E}(S^*) &\approx\frac{2\alpha-1}{4}\big(\frac{1}{\alpha}(1+\frac{R-r_j}{2}+1+\frac{R-r_i}{2})\\
& \quad-\frac{1}{\alpha+\frac{1}{2}}(\frac{R+d-r_i-r_j}{2}+1)\big).
\end{split}
\end{equation*}
By taking inverse of the function we have the estimator:
{\fontsize{8.5}{8.5}\selectfont
\[
\hat{d_{ij}}=f(S^*)= \frac{\alpha+1/2}{\alpha}(R+2)-\frac{1}{2\alpha}(r_i+r_j)-\frac{4(2\alpha+1)}{2\alpha-1}S^*.
\]}
To further simplify, we can also treat the second term as a constant considering the relatively small coefficients and the radial distribution. Consequently, we have a negative correlation between the latent distance $d$ and $S^*$:
\[
f(S^*)\approx \Theta(1) - \frac{4(2\alpha+1)}{2\alpha-1}S^*.
\]

\section{Proof of Theorem 5.1}
\begin{proof}
From the definition we can directly have $d_{ij}\leq \textrm{HGD}(i,j)$. Next we look at the other side of the inequality.

From the expression of $r_{x_0}$ (equation 3), we know that the condition $d_{ij}\geq 2\log(e^{r_i/2}+e^{r_j/2})$ implies a restriction on $r_{x_0}$, namely $r_{x_0}\leq R$.
Let's consider the region defined as the intersection of $D(i,R)$, $D(j,R)$, and $\{x|r_{x_0}-\delta\leq r_x\leq r_{x_0}\}$. In simpler terms, this region represents the overlapping area between two disks, with their radii falling within the specific interval $[r_{x_0}-\delta,r_{x_0}]$. The probability of any given node falling into this region can be expressed as:
{\fontsize{8.5}{8.5}\selectfont
\begin{equation*}
\begin{split}
p &= \frac{1}{2\pi}\int_{r_{x_0}-\delta}^{r_{x_0}} \rho(r) \theta_{i,j} dr\\
&= \frac{1}{2\pi}\int_{r_{x_0}-\delta}^{r_{x_0}}\alpha e^{\alpha(r-R)}(2e^{(R-r_j-r)/2}\\
 & \quad \quad \quad \quad \quad \quad \quad \quad \quad +2e^{(R-r_i-r)/2}-\Delta \theta_{i,j}) dr\\
&=\frac{1}{2\pi}e^{\alpha(r_{x_0}-R)}\Delta \theta_{i,j} (\frac{1}{2\alpha-1}-\frac{2\alpha}{2\alpha-1}e^{-(\alpha-\frac{1}{2})\delta}+e^{-\alpha \delta}).
\end{split}
\end{equation*}}
Thus probability that there exist at least one node in this region is $1-(1-p)^N$. Any node $m$ landing within this region must satisfy $r_m\geq r_{x_0}-\delta$ and $m\in \textrm{N}(i)\cap \textrm{N}(j)$. Then
\begin{equation*}
\begin{split}
\textrm{HGD$(i,j)$} &\leq r_i+r_j \\ &\quad +2\log\big(e^{R/2}(e^{(-r_i-r_m)/2}+e^{(-r_m-r_j)/2})\big)\\ 
&\leq d_{ij}+\delta,
\end{split}
\end{equation*}
with the same probability.
\end{proof}

\section{The Weighted Number of $l$-hop Paths}
The simplest case of the number of $l$-hop paths would be the number of 3-hop paths, since the number of 2-hop paths is equivalent to the number of common neighbors, which has been discussed in section 4.\\
\subsection{A Weighted Scheme of the 3-hop Paths}
Consider a weighted variant of the number of $3$-hop paths:
\[
S^*_3(i,j)=\sum \text{path}^{*3}_{i,j}.
\]
Here, $\text{paths}^{*3}_{i,j}$ signifies that the path is weighted according to the degrees of the nodes connecting node $i$ and node $j$. In other words, if there is a $3$-hop path that connects node $i$ and node $j$, that is, $i\sim m_1\sim m_2\sim j$, then $\text{paths}^{*3}_{i,j}=w(\textrm{deg}(m_1))w(\textrm{deg}(m_2))$. The function $w(\textrm{deg}(m))$ relies on the degree of node $m$, and we may represent this notation more concisely as $w_k$.\\
Computing the exact expectation of such index analytically is quite intricate and can make the analysis of this index difficult.  In section 4 and appendix D.2, we discuss that, for sufficiently large $k$, there exists an approximately linear relationship between $\textrm{N}_k(i)$ and $\Delta \theta_{i,j}$. In order to streamline our analysis, we focus on the cases where $k$ is large. Specifically, we define $S^*_3(i,j,k_0)$ to be the sum of weighted 3-hop paths, where the degree of intermediate nodes is no less than $k_0$.

For the expectation, consider the form $S^*_3(i,j,k_0)=\sum_{k=k_0}^N w_k S^*_{k,2}$, where\\
$S^*_{k,2}=\sum_{\substack{x\in \textrm{N}(i)\\ \textrm{deg}(x)=k}}\textrm{CN}^*(x,j)$, and $\textrm{CN}^*(x,j)$ indicate the weighted number of common neighbors between $x$ and $j$. 
In the following process we only consider the nodes with degree no less than $k_0$. 
The expectation of $S^*_3(i,j,k_0)$ is:
\begin{equation*}
\begin{split}
\mathbb{E}(S^*_3(i,j,k_0))&=\frac{N}{2\pi}\sum_{k=k_0}^N w_k \iint_{x \in D_p(i,R)}\rho(r_{x})g(k|r_{x})\\
&\quad \frac{N}{2\pi}\sum_{l=k_0}^N w_l \iint_{D_p(x,R)\cap D_p(j,R)}\rho(r)g(l|r).
\end{split}
\end{equation*}
\\
Similar to section 4, under the case where $k_0$ is big, the last integral can be approximated as:
{\fontsize{8.5}{8.5}\selectfont
\begin{equation*}
\begin{split}
&\frac{N}{2\pi}\sum_{l=k_0}^N w_l \iint_{D_p(x,R)\cap D_p(j,R)}\rho(r)g(l|r)\\
& \approx \sum_l w_l \frac{1}{l!} \big(\Theta(1)f(r_i,r_{x})\Gamma(l-2\alpha+1)-\Theta(1)\Gamma(l-2\alpha)\Delta\theta_{x,j}\big).
\end{split}
\end{equation*}
}
Then the expectation becomes:
{\fontsize{8.5}{8.5}\selectfont
\begin{equation*}
\begin{split}
\mathbb{E}(S^*_3(i,j,k_0))&=\Theta(1)\sum_{k=k_0}^N w_k \iint_{x \in D_p(i,R)}\rho(r_{x})g(k|r_{x})*\\
&\quad \sum_{l=k_0}^N w_l\frac{1}{l!} \big(\Theta(1)f(r_i,r_{x})\Gamma(l-2\alpha+1)-\\
&\quad \Theta(1)\Gamma(l-2\alpha)\Delta\theta_{x,j}\big)\\
            &\approx \Theta(1)[\sum_k w_k \frac{1}{k!} \Theta(1)f(r_i)\Gamma(k-2\alpha+1)]*\\
            &\quad [\sum_l w_l \frac{1}{l!} \big(\Theta(1)f(r_i,r_{\xi})\Gamma(l-2\alpha+1)-\\
            &\quad \Theta(1)\Gamma(l-2\alpha)\Delta\theta_{i,j}\big)],
\end{split}
\end{equation*}}
\\
where $\xi$ is a imaginary node lies in $D_p(i,R)$ according to the intermediate value theorem; $f(r_i), f(r_i,r_{\xi})$ indicate functions depend on $r_i$, and $r_i,r_{\xi}$ respectively, which we can ignore later. Note that $\xi$ does not depend on $\Delta\theta_{i,j}$, so as we did in section 4, we can construct an unbiased estimator of $\Delta\theta_{i,j}$, in terms of $S^*_3(i,j,k_0)$:
{\fontsize{8.5}{8.5}\selectfont
\begin{equation}
\begin{split}
\hat{f}&(S^*_3(i,j,k_0))=\Theta(1)\frac{\sum_k w_k \frac{1}{k!} (f(r_i,r_{\xi})\Gamma(k-2\alpha+1)}{\Gamma(k-2\alpha)}-\\
&\Theta(1)\frac{S^*_3(i,j,k_0)}{[\sum_l w_l \frac{1}{l!}\Gamma(l-2\alpha)]\cdot [\sum_k w_k \frac{1}{k!}f(r_i)\Gamma(k-2\alpha+1)]}. 
\end{split} \tag{C.1.1}
\end{equation}
}
Then the variance of the estimator is:
{\fontsize{8.5}{8.5}\selectfont
\begin{equation}
    \begin{aligned}
&\textrm{Var}\bigl(\hat{f}(S^*_3(i,j,k_0))\bigl) =\\
&\Theta(1)\frac{\textrm{Var}(S^*_3(i,j,k_0))}{[\sum_k w_k \frac{1}{k!}\Gamma(k-2\alpha)]^2\cdot [\sum_k w_k \frac{1}{k!}f(r_i)\Gamma(k-2\alpha+1)]^2}\\.
    \end{aligned}\tag{C.1.2}
\end{equation}
}
Now we look at the variance of $S^*_3(i,j,k_0)$. Denote $\textrm{N}^*(i)$ as the weighted number of neighbors of node $i$.  we have the upper bound as shown in C.1.3 (presented in table 1). In the equation C.1.3, at (1) we assume that $\textrm{N}^*(i)$ and $\textrm{N}^*(j)$ are independent. And at (2) we only keep the last two terms of (1) which dominant the expression when $N$ is big.\\
\begin{table*}
\centering
\begin{minipage}{0.75\textwidth}
\begin{align*}
    \textrm{Var}(S^*_3(i,j,k_0)) &=\textrm{Var} \big( \sum_l w_l \sum_{\substack{x_1\in \textrm{N}(i)\\deg(x_1)=l}}\textrm{CN}^*(x_1,j)\big)\\
    & \leq \textrm{Var} \big( \sum_l w_l \sum_{\substack{x_1\in \textrm{N}(i)\\deg(x_1)=l}}\textrm{N}^*(j)\big)\\
    & = \textrm{Var}(\textrm{N}^*(i)\cdot \textrm{N}^*(j)) \\
    &= \textrm{Var}(\textrm{N}^*(i))*\textrm{Var}(\textrm{N}^*(j))+\textrm{Var}(\textrm{N}^*(i))\mathbb{E}[\textrm{N}^*(j)]^2+\textrm{Var}(\textrm{N}^*(j))\mathbb{E}[\textrm{N}^*(i)]^2 &&\text{(1)}\\
    &\approx [\sum_k w_k^2 \frac{1}{k!} \Theta(1)f(r_i)\Gamma(k-2\alpha+1)]\cdot [\sum_k w_k \frac{1}{k!} \Theta(1)f(r_j)\Gamma(k-2\alpha+1)]^2  &&\text{(2)}\\
    &\quad + [\sum_k w_k^2 \frac{1}{k!} \Theta(1)f(r_j)\Gamma(k-2\alpha+1)]\cdot [\sum_k w_k \frac{1}{k!} \Theta(1)f(r_i)\Gamma(k-2\alpha+1)]^2.
\end{align*}
\medskip
\hrule
\end{minipage}
\caption{The Equation of C.1.3}
\end{table*}
Plugging above expression into (C.1.2) we have the upper bound of $\textrm{Var}\bigl(\hat{f}(S^*_3(i,j,k_0))\bigl)$. 
{\fontsize{8.5}{8.5}\selectfont
\begin{equation*}
\begin{split}
\textrm{Var}\bigl(\hat{f}(S^*_3(i,j,k_0))\bigl) \leq \Theta(1) \frac{\sum_k w_k^2 \frac{1}{k!}\Gamma(k-2\alpha+1)}{[\sum_k w_k \frac{1}{k!}\Gamma(k-2\alpha)]^2}\\.
\end{split}
\end{equation*}
}
Minimizing the bound in terms of $w_k$ gives us the optimal weight $w_k^*\sim\frac{1}{k}$.\\
The variance of the estimator can not be compute explicitly and hence hard to optimize. As a relaxation, we instead minimize the upper bound of the estimator to obtain a more tight error bound for the new estimator. While $S^*_3(i,j,k_0)$ constitutes a portion to the index $S^*_3(i,j)$, empirical results in section 6 show that the weight scheme improves the overall performance of the total estimator $S^*_3(i,j)$.\\
\subsection{Number of $l$-hop Paths}
In the same way we can extend the analysis to $l$-hop cases. Define a weighted variant of the number of $l$-hop paths:
\[
S^*_l(i,j)=\sum \text{path}^{*l}_{i,j}.
\]
As before, $\text{path}^{*l}_{i,j}$ means that the path is weighed according to the degrees of the nodes connecting node $i$ and node $j$. We assume the weight is unknown and denote as $w_k$.
\\
Here we also use the notation $S^*_l(i,j,k_0)$ to denote the case where the degree of intermediate nodes is no less than $k_0$.
Then the expectation of the weighted number of $l$-hop paths given nodes $i$ and $j$ is the expression shown in Table 2, equation C.2.1.

\begin{table*}
\centering
\begin{minipage}{0.75\textwidth}
\begin{align*}
    \mathbb{E}(S_l^*(i,j,k_0))&=\frac{N}{2\pi}\sum_{k=k_0}^N w_k \iint_{x_1 \in D_1}\rho(r_{x_1})g(k|r_{x_1})\frac{N}{2\pi}\sum_{k=k_0}^N w_k \iint_{x_2 \in D_2|x_1}\rho(r_{x_2})g(k|r_{x_2})...\\
&...\frac{N}{2\pi}\sum_{k=k_0}^N w_k \iint_{x_{l-1} \in D_{l-1}|x_{l-2}}\rho(r_{x_{l-1}})g(k|r_{x_{l-1}}).\tag{C.2.1}
\end{align*}

\begin{align*}
\mathbb{E}(S_l^*(i,j,k_0))&\approx \Theta(1)F(w_k,r_i,r_j)-\\
&\Theta(1)\big(\prod_{m=1}^{l-2}\sum_{k=k_0}^N w_k\frac{1}{k!}e^{(R-r_{\xi_m})/2} \Gamma(k-2\alpha+1)\big)\big(\sum_{k=k_0}^N w_k\frac{1}{k!} \Gamma(k-2\alpha)\big)\Delta\theta_{i,j}. \tag{C.2.2}
\end{align*}

\begin{align*}
\hat{f}(S_l^*(i,j,k_0))=\frac{\Theta(1)F(w_k,r_i,r_j)-S_l^*(i,j,k_0)}{\Theta(1)\big(\prod_{m=1}^{l-2}\sum_{k=k_0}^N w_k\frac{1}{k!}e^{(R-r_{\xi_m})/2} \Gamma(k-2\alpha+1)\big)\big(\sum_{k=k_0}^N w_k\frac{1}{k!} \Gamma(k-2\alpha)\big)}. \tag{C.2.3}
\end{align*}

\begin{align*}
\textrm{Var}[\hat{f}(S_l^*(i,j,k_0))] &\leq \Theta(1)\frac{\Big(\prod_{n=0}^{l-3}\mathbb{E} [\textrm{N}^*(\xi_n)]^2 \Big)\cdot \textrm{Var}\big(\textrm{N}^*(\xi_{l-2})\big)}{\big(\prod_{m=1}^{l-2}\sum_{k=k_0}^N w_k\frac{1}{k!}e^{(R-r_{\xi_m})/2}) \Gamma(k-2\alpha+1)\big)^2\big(\sum_{k=k_0}^N w_k\frac{1}{k!} \Gamma(k-2\alpha)\big)^2}.\\ \tag{C.2.4}
\end{align*} 

\begin{align*}
&\textrm{Var}[\hat{f}(S_l^*(i,j,k_0))]\\
& \leq \Theta(1)\frac{\big(\prod_{n=0}^{l-3}\sum_{k=k_0}^N w_k\frac{1}{k!}e^{(R-r_{\xi_n})/2} \Gamma(k-2\alpha+1)\big)^2\cdot \big(\sum_{k=k_0}^N w_k^2 \frac{1}{k!}e^{(R-r_{\xi_{l-2}})/2}\Gamma(k-2\alpha+1)\big)}{\big(\prod_{m=1}^{l-2}\sum_{k=k_0}^N w_k\frac{1}{k!}e^{(R-r_{\xi_m})/2} \Gamma(k-2\alpha+1)\big)^2\big(\sum_{k=k_0}^N w_k\frac{1}{k!} \Gamma(k-2\alpha)\big)^2} &&(1)\\
&\leq \Theta(1)\frac{\big(\prod_{n=0}^{l-3}\sum_{k=k_0}^N w_k\frac{1}{k!}e^{R/2} \Gamma(k-2\alpha+1)\big)^2\cdot \big(\sum_{k=k_0}^N w_k^2 \frac{1}{k!}e^{R/2}\Gamma(k-2\alpha+1)\big)}{\big(\prod_{m=1}^{l-2}\sum_{k=k_0}^N w_k\frac{1}{k!}e^{(R-R)/2} \Gamma(k-2\alpha+1)\big)^2\big(\sum_{k=k_0}^N w_k\frac{1}{k!} \Gamma(k-2\alpha)\big)^2} &&(2)\\
&=\Theta(1)\frac{\big(\sum_{k=k_0}^N w_k^2 \frac{1}{k!}\Gamma(k-2\alpha+1)\big)}{\big(\sum_{k=k_0}^N w_k\frac{1}{k!} \Gamma(k-2\alpha)\big)^2}. &&(3) \tag{C.2.5}
\end{align*}

\medskip
\hrule
\end{minipage}
\caption{Formulas used in section C.2}
\end{table*}

Where $D_1=D_p(i,R)$, $D_m|x_{m-1}=D_p(x_{m-1},R)$ for $1<m<l-1$, and $D_{l-1}|x_{l-2}=D_p(x_{l-2},R)\cap D_p(j,R)$.
Note that all the integral areas, except the last one, are symmetric around node $i$ in angle. Also note that the last integral can be approximated as a linear function with respect to $\Delta \theta_{i,j}$ when $k_0$ is big. Consequently, the entire integral can be approximated as a linear function of $\Delta\theta_{i,j}$ as shown in equation C.2.2 in table 2.
Where $\xi_{m}$'s are imaginary nodes independent with $\Delta\theta_{i,j}$.\\
Looking at the variance, We first consider the bound for the number of weighted 4-hop paths.
Recall that the unweighted number of 4-hop paths has form of $S_4(i,j)=\sum_{x_1 \in \textrm{N}(i)}\sum_{x_2\in \textrm{N}(x_1)}\textrm{CN}(x_2,j)$, then the variance of the weighted version can be considered as:
{\fontsize{8.5}{8.5}\selectfont
\begin{align*}
\textrm{Var}\big( S_4^*(i,j) \big) &=\textrm{Var} \big( \sum_l w_l \sum_{\substack{x_1\in \textrm{N}(i)\\deg(x_1)=l}} \sum_k w_k \sum_{\substack{x_2\in \textrm{N}(x_1)\\deg(x_2)=k}}\textrm{CN}^*(x_2,j)\big)\\
                &\leq \textrm{Var}\big(\sum_l w_l \sum_{\substack{x_1\in \textrm{N}(i)\\deg(x_1)=l}} \sum_k w_k \sum_{\substack{x_2\in \textrm{N}(x_1)\\deg(x_2)=k}}\textrm{N}^*(j)\big)\\
                &= \textrm{Var}\big(\sum_l w_l \sum_{\substack{x_1\in \textrm{N}(i)\\deg(x_1)=l}}\sum_k w_k n_{1,k}\textrm{N}^*(j)\big) \quad \text{(1)}\\
                &= \textrm{Var}\big(\textrm{N}^*(i) \textrm{N}^*(\xi)\textrm{N}^*(j)\big), \quad \quad \quad \quad \quad \quad \quad \text{(2)}
\end{align*}}
In step (1), the $n_{1,k}$ indicate the number of nodes in $N(x_1)$ with degree $k$. In step (2), the $\xi$ is a imaginary point in $D_p(i,R)$.\\
By induction the variance of $S_l^*(i,j)$ can be bounded by:
$\textrm{Var}S_l^*(i,j)\leq \textrm{Var}(\textrm{N}^*(i)\textrm{N}^*(\xi_1)...\textrm{N}^*(\xi_{l-3})\textrm{N}^*(j)$ where $\xi_n\in D_p(\xi_{n-1},R)$ for $n>1$ and $\xi_1 \in D_p(i,R)$.\\
Note that the product of variance $\textrm{Var}\big(\textrm{N}^*(\xi_0)\textrm{N}^*(\xi_1)...\textrm{N}^*(\xi_{l-2})\big)=\prod_n \Big(\textrm{Var}\big(\textrm{N}^*(\xi_n)\big)+\mathbb{E}[\textrm{N}^*(\xi_n)]^2\Big)-\prod\Big(\mathbb{E}[\textrm{N}^*(\xi_n)]^2\Big)$.
The leading terms of this expression have form of $\Big(\prod_{n=0}^{l-3}\mathbb{E} [\textrm{N}^*(\xi_n)]^2 \Big)\cdot \textrm{Var}\big(\textrm{N}^*(\xi_{l-2})\big)$.\\
By rearranging the linear equation (C.2.2), the unbiased estimator for $\Delta \theta_{i,j}$ has form of equation C.2.3 in table 2.

Subsequently, we have the upper bound of the variance of the estimator as shown in C.2.4 in table 2.\\

To calculate the $\mathbb{E} [\textrm{N}^*(\xi)]$ and $\textrm{Var}\big(\textrm{N}^*(\xi)\big)$, define $n_{x,k}$ as the number of nodes in $D_p(x,R)$ with degree $k$, then:
{\fontsize{8.5}{8.5}\selectfont
\begin{equation*}
\begin{split}
\textrm{Var}(n_{x,k})&= \mathbb{E}(n_{x,k})\\
&=\frac{N}{2\pi}\int_{0}^{R}\, \rho(r)\, g(k|r)\, \theta_{x}(r)  dr\\
&=\frac{N}{2\pi}\int_{r_{x_1}}^{\hat{r}_{x_0}} \alpha e^{\alpha(r-R)}\frac{e^{-\bar{k}(r)}\bar{k}(r)^k}{k!}\Big(2e^{(R-r_x-r)/2}\Big)dr \\
&\approx \Theta(1)\frac{1}{k!}e^{(R-r_x)/2}\Gamma(k-2\alpha+1).
\end{split}
\end{equation*}}
Thus, we have $\mathbb{E} [\textrm{N}^*(\xi)]=\Theta(1)\sum_k w_k\frac{1}{k!}e^{(R-r_x)/2}\Gamma(k-2\alpha+1)$, and\\
$\textrm{Var}\big(\textrm{N}^*(\xi)\big)=\Theta(1)\sum_k w_k^2\frac{1}{k!}e^{(R-r_x)/2}\Gamma(k-2\alpha+1)$ as we assume each $n_{x,k}$ nearly independent.
Finally, the bound of variance can be derived in C.2.5 in table 2. Note that from step (1) to (2), the denominator terms $e^{(R-r_{\xi_m})/2}$ in (1) are reduced to $e^{(R-R)/2}$ and the numerator terms $e^{(R-r_{\xi_n})/2}$ are  magnified to $e^{R/2}$ since $r_{\xi}\in [0,R]$ for any ${\xi}$. This allows the first term in the numerator and denominator to cancel out to a constant in (3).\\
Following this simplification, we are able to minimize the variance bound and find that the optimal weight minimizing this variance bound should also be proportional to $\frac{1}{k}$.

\section{Supplemental Materials for Section 4.2}
\subsection{Calculation of the Integral in Section 4.2}
\begin{table*}
\centering
\begin{minipage}{0.75\textwidth}
\begin{align*}
    \mathbb{E}[S] 
    & \approx \frac{N}{2\pi}\sum_k w_k \int_{r_{x_1}}^{\hat{r}_{x_0}}\, \rho(r)\, g(k|r)\, \theta_{i,j}(r)  dr  \\
  &=\frac{N}{2\pi}\sum_k w_k \int_{r_{x_1}}^{\hat{r}_{x_0}} \alpha e^{\alpha(r-R)}\frac{e^{-\bar{k}(r)}\bar{k}(r)^k}{k!}\Big(2e^{(R-r_j-r)/2}+2e^{(R-r_i-r)/2}-\Delta \theta_{i,j}\Big)dr \\
  &=\frac{N}{2\pi}\sum_k w_k \int_{\bar{k}(\hat{r}_{x_0})}^{\bar{k}(\hat{r}_{x_1})} \Big[f_1(r_i,r_j)\bar{k}(r)^{-2\alpha}\frac{e^{-\bar{k}(r)}\bar{k}(r)^k}{k!}-\Theta(1)\bar{k}(r)^{-(2\alpha+1)}\frac{e^{-\bar{k}(r)}\bar{k}(r)^k}{k!}\Delta \theta_{i,j}\Big] d\bar{k}(r) &&\text{(1)}\\
  &\approx\sum_k w_k \frac{1}{k!}\Big[\Theta(1)f_1(r_i,r_j)\Gamma\big(k-2\alpha+1,\bar{k}(\hat{r}_{x_0})\big)-\Theta(1)\Gamma\big(k-2\alpha,\bar{k}(\hat{r}_{x_0})\big)\Delta\theta_{i,j}\Big], &&\text{(2)}
\end{align*}
\medskip
\hrule
\end{minipage}
\caption{The Equation of D.1.1}
\end{table*}
The calculation is shown as equation D.1.1 in table 3, where $\Gamma$ is the upper incomplete-gamma function, and $f_1(r_i,r_j)$ indicates a function of $r_i$ and $r_j$. In (1) we change the variable of the integral using equation(2) in section 3.3:
\begin{equation*}
\bar{k}(r)=\frac{4N\alpha}{\pi(2\alpha-1)}e^{-r/2}.
\end{equation*}
And in (2) we ignore the non-leading terms after integration.
\subsection{The Case Where $r_0<R$ in Section 4.2}
\textbf{Assume} $r_{x_0}<R$: \\
In this case  the association between the angle $\Delta \theta_{i,j}$ and $\mathbb{E}(n_k)$ is no longer linear, as the term $\Gamma\big(k-2\alpha,\bar{k}(r_{x_0})\big)$ contains $r_{x_0}$, which depends on $\Delta \theta_{i,j}$.
Consequently, we are unable to derive an explicit expression for $\Delta \theta_{i,j}$ in terms of $n_k$. For sufficiently large values of $k$, however, the incomplete upper gamma function $\Gamma(k-2\alpha,\bar{k}(r_{x_0}))$ becomes close to the complete gamma function $\Gamma(k-2\alpha)$. This approximation results in a linear relationship between $\Delta \theta_{i,j}$ and $\mathbb{E}(n_k)$ once again. Therefore, we can use the same reasoning as if $r_{x_0}\geq R$. This is equivalent to finding an optimal weight for an estimator $S_{k_0}=\sum_{k=k_0}^N w_k n_k$ for sufficiently large $k_0$. In this case, we can obtain the same weight $w_{k^*}\sim\frac{1}{k^*}$.
\subsection{Optimizing the Variance in Section 4.2}
The variance of $\hat{f}(S)$ is:
\begin{equation*}
\begin{split}
\textrm{Var}(\hat{f}(S)) &= \frac{\sum w_k^2 \textrm{Var}(n_k)}{[\Theta(1)\sum_k w_k\frac{1}{k!} \Gamma\big(k-2\alpha,\bar{k}(R)\big)]^2} ,
\end{split}
\end{equation*}
To minimize the variance of the unbiased estimator, we set $\frac{d\, \textrm{Var}(\hat{f}(S))}{d w_{k^*}}=0$, and get:
\begin{equation*}
\begin{split}
&\Theta(1)[\sum_{k} w_{k} \frac{1}{k!} \Gamma(k-2\alpha,\bar{k}(R))]^2\cdot 2w_{k^*}\textrm{Var}(n_{k^*})\\
&-\sum_{k_1} w_{k_1}^2 \textrm{Var}(n_{k_1})\cdot2[\Theta(1)\sum_{k_2} w_{k_2}\frac{1}{k_2!} \Gamma(k_2-2\alpha,\bar{k}(R)]\\
&\cdot \frac{1}{k^*!}\Gamma({k^*}-2\alpha,\bar{k}(R))=0.
\end{split}
\end{equation*}
By solving this equation we get:
$w_{k^*} \propto \frac{\Gamma({k^*}-2\alpha,\bar{k}(R))}{k^*!\textrm{Var}(n_{k^*})}= \frac{\Gamma({k^*}-2\alpha,\bar{k}(R))}{\Gamma(k^*-2\alpha+1,\bar{k}(R))-\Theta(1)\Gamma(k^*-2\alpha,\bar{k}(R))\Delta \theta}\sim \frac{1}{k^*}$ as $k^*\to \infty$.\\
\section{The Katz Index}
Closely related to the indices depending on the $l$-hops paths, the Katz index depends on the set of all paths between nodes $i$ and $j$. It can be thought of as a linear combination of the $l$-hop paths indices:
\begin{equation}
\label{eq:katz}
\textrm{Katz}(i,j)=\sum_{l=1}^\infty \beta^l|\textrm{paths}^{l}_{i,j}|, \tag{E.1}
\end{equation}
where $\beta$ is a damping factor.
Naturally, we can also define a weighted version of this:\\
\textbf{The weighted Katz index}\\
Consider a weighted variant of the Katz index:
\begin{equation}
\label{eq:wkatz}
\textrm{Katz}^*(i,j)=\sum_{l=1}^\infty \beta^l(\sum \textrm{path}^{*l}_{i,j}),
\tag{E.2}
\end{equation}
We choose the degree weight to be $\frac{1}{k}$ as before. In practice, the index could be expensive to calculate for a large network since it involves the calculation of a matrix inverse. A common approach is to localize the index \cite{zhou2009predicting,lu2009similarity}:\\
\textbf{The localized weighted Katz index}
\begin{equation}
\label{eq:lwkatz}
\textrm{Katz}^*(i,j)\textrm{-}t=\sum_{l=1}^t \beta^l(\sum \textrm{path}^{*l}_{i,j}).
\tag{E.3}
\end{equation}

\section{Correlations and Link Prediction experiments on HRGs}
\subsection{Correlation Test}
For synthetic graphs, we supplement the main paper's experiments by including 6 additional HRGs (indices 7-12; 1-6 are used in main paper). The parameters used are shown in table 4. For each index, we calculate Spearman's correlation coefficient (detailed in Appendix H) relative to the true latent distance as an evaluation metric. \\
The following indices are included in the experiments:
\textbf{CN}: Number of common neighbors;
\textbf{AA}: The Adamic/Adar index;
\textbf{RA}: The Resource Allocation index;
\textbf{GD}: The Graph distance;
\textbf{HGD}: The Hyperbolic Graph distance ;
\textbf{$l$-hop}: Number of $l$-hop paths, where $l\textrm{-hop}^*$ means it is weighted by $\frac{1}{k}$;
\textbf{Katz}: The Katz index (\ref{eq:katz}), where $\textrm{Katz}^*$ means it is weighted by $\frac{1}{k}$ (\ref{eq:wkatz});.
\textbf{Katz-t}: The Localized Katz index, where $\textrm{Katz*-t}$ means it is weighted by $\frac{1}{k}$ and contains at most $t$-hop paths (\ref{eq:lwkatz}). For the Katz index, we let the damping factor, $\beta$, range from 0.001 to 0.5, and then record the best result. For each experiment, we independently run 10 times and report the average score.

\begin{table*}[h]
\vspace{2mm}
\centering
\begin{tabular}{lllllllllllll} 
\hline
  & 1   & 2   & 3    & 4    & 5    & 6    & 7   & 8   & 9    & 10   & 11   & 12    \\ 
\hline
N & 500 & 500 & 1000 & 1000 & 2000 & 2000 & 500 & 500 & 1000 & 1000 & 2000 & 2000  \\
R & 8   & 10  & 10   & 12   & 10   & 12   & 10  & 8   & 12   & 10   & 12   & 10    \\
$\alpha$ & 0.8 & 1   & 0.8  & 1    & 1    & 1.2  & 0.8 & 1   & 0.8  & 1    & 1    & 1.2   \\
\hline
\end{tabular}
\caption{Synthetic networks generated by HRG with different parameters.}
\end{table*}
\begin{table*}[h]
\vspace{2mm}
\centering
\resizebox{1\textwidth}{!}{
\begin{tabular}{lllllllllllll} 
\hline
       & 1                & 2                & 3                & 4                & 5                & 6                & 7                & 8                & 9                & 10               & 11               & 12                \\ 
\hline
CN     & -0.8996          & -0.5340          & -0.8043          & -0.4279          & -0.7079          & -0.3329          & -0.6933          & -0.7930          & -0.5553          & -0.5922          & -0.4346          & -0.5463           \\
AA     & -0.9070          & -0.5358          & -0.8172          & -0.4286          & -0.7124          & -0.3331          & -0.7036          & -0.8060          & -0.5587          & -0.5977          & -0.4357          & -0.5473           \\
RA     & \textit{-0.9098} & \textit{-0.5360} & \textit{-0.8217} & \textit{-0.4286} & \textit{-0.7137} & \textit{-0.3331} & \textit{-0.7057} & \textit{-0.8121} & \textit{-0.5588} & \textit{-0.5985} & \textit{-0.4357} & \textit{-0.5474}  \\ 
\hline
GD     & 0.6990           & 0.5865           & 0.7010           & 0.5741           & 0.6957           & 0.4608           & 0.6089           & 0.6588           & 0.5670           & 0.5919           & 0.5136           & 0.6136            \\
HGD    & \textbf{0.9704}  & \textbf{0.8396}  & \textbf{0.9533}  & \textbf{0.7959}  & \textbf{0.9553}  & \textbf{0.8102}  & \textbf{0.9001}  & \textbf{0.9517}  & \textbf{0.8717}  & \textbf{0.9240}  & \textbf{0.8703}  & \textbf{0.9426}   \\ 
\hline
3-hop   & -0.9058          & -0.6054          & -0.8644          & -0.5405          & -0.8267          & -0.4814          & -0.7425          & -0.8562          & -0.6612          & -0.6729          & -0.5526          & -0.6919           \\
3-hop*  & -0.9580          & -0.6481          & -0.9263          & -0.5614          & -0.8736          & -0.4891          & -0.8446          & -0.9178          & -0.7321          & -0.7435          & -0.5881          & -0.7291           \\
4-hop   & -0.8403          & -0.6046          & -0.8118          & -0.5668          & -0.8110          & -0.5024          & -0.6827          & -0.7989          & -0.6274          & -0.6552          & -0.5567          & -0.7006           \\
4-hop*  & \textit{-0.9618} & \textit{-0.6899} & -\textit{0.9383} & -\textit{0.6093} & -\textit{0.8946} & -\textit{0.5266} & -\textit{0.8686} & -\textit{0.9308} & -\textit{0.7837} & \textit{-0.7788} & \textit{-0.6306} & \textit{-0.7655}  \\ 
\hline
katz-3  & -0.9252          & -0.6094          & -0.8723          & -0.5421          & -0.8267          & -0.4815          & -0.7558          & -0.8581          & -0.6632          & -0.6745          & -0.5530          & -0.6919           \\
katz*-3 & -0.9492          & -0.6455          & -0.9098          & -0.5619          & -0.8678          & -0.4890          & -0.8295          & -0.8993          & -0.7257          & -0.7369          & -0.5875          & -0.7286           \\
katz-4  & -0.9247          & -0.6200          & -0.8675          & -0.5718          & -0.8249          & -0.5027          & -0.7522          & -0.8539          & -0.6590          & -0.6769          & -0.5639          & -0.7029           \\
katz*-4 & -0.9540          & -0.6785          & -0.9194          & -0.6058          & -0.8805          & -0.5231          & -0.8452          & -0.9087          & -0.7571          & -0.7581          & -0.6217          & -0.7575           \\
katz   & -0.9246          & -0.6137          & -0.8672          & -0.5806          & -0.8241          & -0.4776          & -0.7520          & -0.8538          & -0.6578          & -0.6763          & -0.5581          & -0.7017           \\
katz*  & \textit{-0.9586} & \textit{-0.6976} & \textit{-0.9288} & \textit{-0.6442} & \textit{-0.8922} & \textit{-0.5362} & \textit{-0.8584} & \textit{-0.9181} & \textit{-0.7791} & \textit{-0.7757} & \textit{-0.6420} & -\textit{0.7716}  \\
\hline
\end{tabular}
}
\caption{Spearman's correlation coefficient of indices in synthetic networks}
\end{table*}
Table 5 shows results consistent with previous observations: the weighted version outperforms the unweighted version in HRG, and HGD excels in estimating latent distances. This reinforces the effectiveness of the weighted approach and HGD's superior performance in these settings, and verifies our theoretical analysis. 
\subsection{Link Prediction}
The evaluation of link prediction performance follows a widely adopted approach. Given a graph $G = (V, E)$, we partition the edge set $E$ into two sets: the training set $E_t$, and the probe set $E_p$, which consists of the missing links. Only the information contained in $E_t$ is utilized to predict the missing links in $E_p$. In our experiment, we remove $10\%$ of the total edges. To measure how well these indices can predict the missing links, we choose average precision among numerous metrics (Appendix H explains why choosing average precision over other metrics). The results in Table 6 are consistent with those presented in the main paper.

\begin{table*}[h]
\vspace{2mm}
\centering
\resizebox{1\textwidth}{!}{
\begin{tabular}{lllllllllllll} 
\hline
       & 1               & 2               & 3               & 4               & 5               & 6               & 7               & 8               & 9               & 10              & 11              & 12               \\ 
\hline
CN     & 0.5430          & 0.3413          & 0.4950          & 0.2907          & 0.6286          & 0.4077          & 0.3578          & 0.5731          & 0.2914          & 0.4905          & 0.4149          & 0.6337           \\
AA     & 0.6897          & 0.5273          & 0.6679          & 0.4818          & 0.7461          & 0.5774          & 0.5398          & 0.7003          & 0.5064          & 0.6471          & 0.5914          & 0.7442           \\
RA     & 0.8041          & 0.5879          & 0.7596          & 0.5108          & \textbf{0.8141} & \textbf{0.6331} & 0.6303          & \textbf{0.7662} & 0.5611          & \textbf{0.7286} & \textbf{0.6528} & \textbf{0.8091}  \\ 
\hline
GD     & 0.0092          & 0.0162          & 0.0058          & 0.0189          & 0.0056          & 0.0210          & 0.0074          & 0.0133          & 0.0061          & 0.0123          & 0.0135          & 0.0067           \\
HGD    & 0.6463          & 0.3652          & 0.5346          & 0.3053          & 0.4435          & 0.2914          & 0.4570          & 0.4659          & 0.3930          & 0.3846          & 0.3411          & 0.3627           \\ 
\hline
3-hop   & 0.3334          & 0.1779          & 0.2514          & 0.1442          & 0.2940          & 0.1823          & 0.1901          & 0.3128          & 0.1555          & 0.2479          & 0.1885          & 0.3056           \\
3-hop*  & 0.7186          & 0.4739          & 0.6922          & 0.3923          & 0.7312          & 0.5046          & 0.5441          & 0.6775          & 0.5072          & 0.6302          & 0.5494          & 0.7043           \\
4-hop   & 0.2507          & 0.1622          & 0.1681          & 0.1049          & 0.1643          & 0.1140          & 0.1410          & 0.2075          & 0.1188          & 0.1582          & 0.1113          & 0.1646           \\
4-hop*  & 0.6122          & 0.4869          & 0.6178          & 0.4215          & 0.6525          & 0.5034          & 0.5209          & 0.6138          & 0.4967          & 0.5887          & 0.5244          & 0.6279           \\ 
\hline
katz-3  & 0.5498          & 0.3508          & 0.5038          & 0.3031          & 0.6364          & 0.4110          & 0.3716          & 0.5842          & 0.3081          & 0.5007          & 0.4244          & 0.6404           \\
katz*-3 & \textbf{0.8068} & \textbf{0.5902} & \textbf{0.7632} & \textbf{0.5143} & \textbf{0.8141} & \textbf{0.6334} & \textbf{0.6334} & \textbf{0.7665} & \textbf{0.5732} & \textbf{0.7290} & \textbf{0.6532} & \textbf{0.8090}  \\
katz-4  & 0.5492          & 0.3511          & 0.5036          & 0.3036          & 0.6361          & 0.4111          & 0.3716          & 0.5841          & 0.3080          & 0.5007          & 0.4245          & 0.6404           \\
katz*-4 & \textbf{0.8065} & 0.5896          & 0.7624          & \textbf{0.5139} & \textbf{0.8141} & \textbf{0.6334} & 0.6326          & \textbf{0.7665} & 0.5714          & \textbf{0.7290} & \textbf{0.6531} & \textbf{0.8090}  \\
katz   & 0.5491          & 0.3511          & 0.5036          & 0.3036          & 0.6361          & 0.4111          & 0.3716          & 0.5841          & 0.3080          & 0.5007          & 0.4245          & 0.6404           \\
katz*  & \textbf{0.8065} & 0.5896          & 0.7619          & \textbf{0.5139} & \textbf{0.8141} & \textbf{0.6334} & 0.6322          & \textbf{0.7665} & 0.5709          & \textbf{0.7290} & \textbf{0.6531} & \textbf{0.8090}  \\
\hline
\end{tabular}
}
\caption{AP (average precision) score of indices in synthetic networks.}
\end{table*}

\section{Link Prediction Experiments on Real World Graphs}
In this section, we conduct extensive experiments on real networks to evaluate the performance of the link prediction indices. In selecting network datasets for these experiments, we ensured to cover a wide range of fields:\\
\textbf{Karate}: A friendship network comprising 34 members of a Karate club; \textbf{Dolphin}: A social network representing the interactions among 62 bottlenose dolphins; \textbf{USAir97}: An airline network depicting connections between US airports in 1997; \textbf{Netscience}: A coauthorship network of 379 researchers in the field of network theory; \textbf{C.elegans}: A network involving biochemical reactions occurring within the cells of the model organism Caenorhabditis elegans (C.elegans); \textbf{Harvard}: A social friendship network extracted from Facebook, consisting of 1446 people affiliated with Harvard, with edges indicating friendship ties. \textbf{NS}: A collaboration network of researchers in network science with 1,589 nodes and 2,742 edges which is a bigger dataset than Netscience dataset. \textbf{PB}: A network of US political blogs with 1222 nodes and 16714 edges. \textbf{Yeast}: A protein-protein interaction network in yeast with 1458 nodes and 1948 edges. \textbf{Cora}: A dataset consists of 2708 scientific publications classified into one of seven classes. The citation network consists of 5278 links. \textbf{Citeseer}: A dataset contains 3264 scientific publications grouped into one of six categories consisting of 4536 links.

\begin{table}[h]
\vspace{2mm}
\centering
\begin{tabular}{llllll} 
\hline
           & $|V|$    & $E$     & $\bar{k}$           & C      & D       \\ 
\hline
Karate     & 34   & 78    & 4.588       & 0.571 & 2.408   \\
Dolphin    & 62   & 159   & 5.129       & 0.259  & 3.357   \\
USair97    & 332  & 2126  & 12.807      & 0.625  & 2.738   \\
Netscience & 379  & 914   & 4.823       & 0.741  & 6.042   \\
C.elegans   & 454  & 2040  & 8.986       & 0.645  & 2.660    \\
Harvard    & 1446 & 59589 & 82.419      & 0.323  & 2.228   \\
PB         & 1222 & 16717 & 27.360 & 0.3202 & 2.737   \\
NS         & 1589 & 2742  & 3.451      & 0.6377 & 6.042  \\
Yeast      & 1458 & 1948  & 2.672      & 0.0708 & 6.812  \\
Cora       & 2708 & 5278  & 3.898       & 0.240 & 6.311  \\
Citeseer   & 3264 & 4536  & 2.779      & 0.144 & 9.310  \\
\hline
\end{tabular}
\caption{Properties of the real networks. $|V|$ and $|E|$ are the total numbers of nodes and edges of the networks. $\bar{k}$ represents the average degree, $C$, the clustering coefficient, and $D$, the average path length.}
\end{table}
\begin{table*}[h]
\vspace{2mm}
\centering
\resizebox{1\textwidth}{!}{
\begin{tabular}{llllllllllll} 
\hline
        & Karate          & Dolphin         & USair97         & Netscience      & C.elegans        & Harvard         & PB              & NS              & Yeast           & Cora            & Citeseer         \\ 
\hline
CN      & 0.1122          & 0.0949          & 0.3363          & 0.3218          & 0.0663          & 0.1111          & 0.0931          & 0.4257          & 0.0039          & 0.0131          & 0.0086           \\
AA      & 0.1349          & 0.1170          & 0.3728          & 0.5641          & 0.1312          & 0.1227          & 0.0920          & 0.6761          & 0.0059          & 0.0229          & 0.0137           \\
RA      & 0.1317          & 0.1011          & 0.4244          & 0.5905          & 0.1756          & \textbf{0.1552} & 0.0754          & 0.6978          & 0.0055          & 0.0216          & 0.0128           \\ 
\hline
GD      & 0.0209          & 0.0238          & 0.0109          & 0.0310          & 0.0052          & 0.0087          & 0.0060          & 0.0591          & 0.0008          & 0.0037          & 0.0046           \\
HGD     & 0.1269          & 0.0355          & 0.3225          & 0.0865          & 0.0655          & 0.0267          & 0.0325          & 0.1001          & 0.0011          & 0.0028          & 0.0045           \\ 
\hline
3-hop   & 0.0883          & 0.0397          & 0.3287          & 0.1279          & 0.0821          & 0.0559          & \textbf{0.1082} & 0.2611          & 0.0059          & 0.0082          & 0.0162           \\
3-hop*  & \textbf{0.1643} & 0.0582          & 0.4164          & 0.3624          & 0.1243          & 0.0837          & 0.1020          & 0.4990          & \textbf{0.0105} & 0.0248          & \textbf{0.0320}  \\
4-hop   & 0.1129          & 0.0697          & 0.3071          & 0.1054          & 0.0577          & 0.0442          & 0.0944          & 0.2164          & 0.0021          & 0.0074          & 0.0062           \\
4-hop*  & 0.1444          & 0.1126          & 0.3836          & 0.5135          & 0.1200          & 0.0617          & 0.0743          & 0.6444          & 0.0050          & 0.0243          & 0.0145           \\ 
\hline
katz-3  & 0.1583          & 0.1143          & 0.3476          & 0.3288          & 0.0869          & 0.1069          & \textbf{0.1083} & 0.4293          & 0.0070          & 0.0185          & 0.0177           \\
katz*-3 & 0.1523          & 0.1092          & \textbf{0.4460} & 0.5980          & \textbf{0.1774} & \textbf{0.1551} & 0.0844          & 0.7049          & 0.0076          & 0.0273          & 0.0185           \\
katz-4  & 0.1513          & \textbf{0.1218} & 0.3470          & 0.3287          & 0.0786          & 0.1060          & 0.1007          & 0.4276          & 0.0070          & 0.0184          & 0.0128           \\
katz*-4 & 0.1541          & 0.1105          & 0.4412          & \textbf{0.5987} & 0.1767          & \textbf{0.1551} & 0.0837          & 0.7066          & 0.0072          & 0.0276          & 0.0185           \\
katz    & 0.1513          & \textbf{0.1217} & 0.3470          & 0.3287          & 0.0782          & 0.1058          & 0.0993          & 0.4275          & 0.0070          & 0.0185          & 0.0139           \\
katz*   & 0.1514          & 0.1143          & 0.4363          & \textbf{0.5987} & 0.1766          & \textbf{0.1551} & 0.0836          & \textbf{0.7080} & 0.0074          & \textbf{0.0285} & 0.0194           \\
\hline
\end{tabular}
}
\caption{AP (average precision) score of indices in real networks}
\end{table*}
The basic statistics of these networks are provided in Table 7. In Table 8, additional experiments show that the weighted schemes generally outperform their unweighted counterparts, with a few exceptions. Notably, HGD consistently outperforms standard graph distance in most cases, except for the Cora dataset and the Citeseer dataset, where performance is nearly identical. These results support the effectiveness of our analysis and proposed modifications on real-world networks. Furthermore, we observe that incorporating longer-hop paths does not significantly enhance link prediction performance.
\section{Evaluation Metrics: the Spearman correlation coefficient and Average precision}
\subsection{Spearman Correlation Coefficient}
As previously discussed, the link prediction task does not require an accurate estimation of the latent distance. Instead, a weaker ordinal relationship is sufficient for effective link prediction. To assess whether an index reflects the ordinal relation between distances, we can utilize the Spearman correlation coefficient.
The Spearman correlation coefficient measures how well the relationship between two variables can be described using a monotonic function. It ranges from -1 to 1, with higher values indicating a stronger correlation. Intuitively, a high Spearman correlation suggests that the index and the true distance have a similar rank (a perfect rank match has correlation of 1), indicating that the index is well-suited for the link prediction task.\\
\subsection{Average Precision}
ROC curve (receiver operating characteristic curve) and PR curve (Precision-recall curve ) are two widely common used metric for classification task. The ROC curve is typically used in balanced classification tasks, where the class distribution is relatively equal. However, in link prediction, the number of positive samples (missing edges) are far less than the number of negative samples(non-connected edges), resulting in a heavily imbalanced dataset. In contrast, the PR curve, which focuses on the trade-off between precision (the fraction of correctly predicted positive examples) and recall (the fraction of actual positive examples predicted correctly), provides a more accurate representation of the model's performance in this skewed scenario.
precision-recall curve is a graph showing the performance of a classification model at all classification thresholds. This curve plots two parameters:\\
Precision
$$
\textrm{PR}=\frac{\textrm{TP}}{\textrm{TP}+\textrm{FP}}.
$$
Recall
$$
\textrm{RC}=\frac{\textrm{TP}}{\textrm{TP}+\textrm{FN}}.
$$
AUPR is a single number used to quantify the quality of  a Precision-Recall curve,
The area under precision-recall curve (AUPR) is then:
$$
\textrm{AUPR}=\int_0^1 \textrm{PR}(t)\textrm{RC}(t)dt.
$$
There are multiply ways to approximate AUPR. J.Davis and M.Goadrich \citep{davis2006relationship} pointed out that using trapezoid rule to approximate AUPR may cause overly optimistic result, a more reasonable approximation is to calculate average precision :
$$
\textrm{AP}=\sum (\textrm{RC}_{n}-\textrm{RC}_{n-1})\textrm{PR}_n.
$$
where $\textrm{PR}_n$ and $\textrm{RC}_{n}$ are the precision and recall at the $n$th threshold. The average precision can be think as a weighted mean of the precisions achieved at each threshold. AP ranges between 0 and 1, with a higher value indicating better performance of the predictor.

\section{Graph embedding with h-MDS}
In this section, we take h-MDS, a hyperbolic variant of multidimensional scaling, as an example, replacing its graph distance with HGD to demonstrate improved mean average precision (MAP) on tree-structured datasets.\\
\subsection{Background}
The hyperboloid model, commonly known as the Minkowski model, is isometric to the Poincaré model. In this model, $n$-dimensional points are mapped onto the two-sheeted hyperboloid in $(n+1)$-dimensional Minkowski space. This model embeds hyperbolic space isometrically into Minkowski space, allowing the hyperbolic distance to be derived from the underlying Minkowski space. This is analogous to deriving spherical distances from Euclidean distances in the case of an $n$-sphere embedded in $(n+1)$-dimensional Euclidean space.\\
Let $x_1=(x_{0,1},x_{1,1},...,x_{n,1})$ and $x_2=(x_{0,2},x_{1,2},...,x_{n,2})$ be two vectors in $\mathbb{R}^{n+1}$, then the Minkowski bilinear form is defined as:
$$
<x_1,x_2>_L=\sum_{i=1}^{n} x_{i,1}x_{i,2}-x_{0,1}x_{0,2}.
$$
Sometimes the above bilinear form is called Lorenz inner product. Note that unlike Euclidean dot product, the bilinear form is not positive definite, i.e. there exist a vector $z$, such that $<z,z>_l<0$. Actually, the $n$-dimensional hyperboloid $\mathbb{H}^n$ is defined as a collection of such points:
$$
\mathbb{H}^n=\{x\in \mathbb{R}^{n+1}|<x,x>_L=-1, x_0>0\}.
$$
The distance between two points on the hyperboloid can be calculated by:
$$
d_{\mathbb{H}^n}(x_1,x_2)=\text{ar\text{cosh}} (-<x_1,x_2>_L),
$$
for any $x_1,x_2\in \mathbb{H}^n$.\\
Suppose there are points $x_1,...,x_n \in \mathbb{H}^{n}$. We can compute the scaled distance matrix $Y$ such as:
\[
Y_{i,j}=\textrm{cosh}(d_{i,j})=x_{0,i}x_{0,j}-\hat{x}_i^T\hat{x}_j,
\]
where $\hat{x}_i=(x_{1,i},...,x_{n,i})$.
Define $X=(\hat{x}_1,...,\hat{x}_n)$ and $u=(x_{0,1},...x_{0,n})$. Then we can write Y in matrix form as
\[
Y=uu^T-XX^T.
\]
Without loss of generality, we can suppose that $X^Tu=0$, which is defined as pseudo-Euclidean mean in \citep{sala2018representation}.\\
This implies that $u$ is an eigenvector of $Y$ with positive eigenvalue, and the rest of $Y$ ’s eigenvalues are negative whose corresponding eigenvectors is $X$. Therefore by eigendecomposition of $Y$ we can recover $X$ up to rotation; See algorithm \ref{alg:hmds} for details. Note that the original algorithm uses standard graph distance to calculate the distance matrix.\\
\begin{algorithm}
\caption{h-MDS}\label{alg:hmds}
\begin{algorithmic}[1]
\State \textbf{Input: Distance matrix $d_{ij}$ and rank $r$}
\State Compute scaled distance matrix $Y=\textrm{cosh}(d_{i,j})$
\State Eigenvalue decomposition ($Y,r$) $\rightarrow X$
\State Project $X$ from hyperboloid model to Poincaré model: $x \rightarrow \frac{x}{1+\sqrt{1+||x||^2}}$
\State If desired, center $X$ at a different mean
\State \textbf{Return $X$}
\end{algorithmic}
\end{algorithm}
\subsection{Metric and Experiments}
We use a widely used metric, mean average precision (MAP) to measure the performance of embedding, in terms of preserving local similarity. For a graph $G=(V,E)$, let $a\in V$ have neighborhood $N_a=\{b_1,b_2,...\}$. In the embedding $f$, consider the points cloest to $f(a)$, and define $R_{a,b_i}$ to be the smallest set of such points that contains $b_i$ ($R_{a,b_i}$ is the smallest set of nearest points required to include the $i$th neighbor of $a$ in $f$). Then the MAP is defined as:
\[
\textrm{MAP}(f)=\frac{1}{|V|}\sum_{a\in V} \frac{1}{\textrm{deg}(a)}\sum_{i=1}^{|N_a|}\frac{|N_a \cap R_{a,b_i}|}{|R_{a,b_i}|}.
\]
The tree-structured datasets used include: a small and fully-balanced tree consisting of 40 nodes; a phylogenetic tree consisting of 344 nodes\citep{sanderson1994treebase}\citep{hofbauer2016preliminary}; a CS-PhD graph of the relations between advisors and PhD students consisting of 1025 nodes\citep{nooy2011exploratory}; A transitive closure of the WordNet mammals subset, consisting of 2250 nodes.\\
Table 3 in the main paper shows that incorporating HGD into graph embedding methods can improve the performance of embedding representations in hierarchical datasets.